\newcommand{\subalign}[1]{%
	\vcenter{%
		\Let@ \restore@math@cr \default@tag
		\baselineskip\fontdimen10 \scriptfont\tw@
		\advance\baselineskip\fontdimen12 \scriptfont\tw@
		\lineskip\thr@@\fontdimen8 \scriptfont\thr@@
		\lineskiplimit\lineskip
		\ialign{\hfil$\m@th\scriptstyle##$&$\m@th\scriptstyle{}##$\hfil\crcr
			#1\crcr
		}%
	}%
}
\newcommand{\eqdef}{=}
\newcommand{\GRS}{G^{\mathrm{rel}}_{\mathrm{Stern}}}
\newcommand{\comm}{y}
\newcommand{\vvv}{\mathbf{v}}
\newcommand{\ev}{{\mathbf{e}}}
\newcommand{\sv}{{\mathbf{s}}}
\newcommand{\vv}{{\mathbf{v}}}
\newcommand{\tv}{{\mathbf{t}}}
\newcommand{\zv}{{\mathbf{0}}}
\newcommand{\Hm}{{\mathbf{H}}}
\newcommand{\norm}[1]{\left\lVert#1\right\rVert}
\newcommand{\pp}{\mathcal{P}}
\renewcommand{\vv}{\mathcal{V}}
\newcommand{\F}{\mathbb{F}}
\newcommand{\Unif}{\xleftarrow{\$}}
\newcommand{\SD}{\ensuremath{\mathrm{SD}}}
\definecolor{coll}{HTML}{000090}
\def\checkmark{\tikz\fill[scale=0.4](0,.35) -- (.25,0) -- (1,.7) -- (.25,.15) -- cycle;} 
\newtheorem{theorem}{Theorem}
\newtheorem{proposition}{Proposition}
\newtheorem{lemma}{Lemma}
\newtheorem{assumption*}{Assumption}
\newtheorem{problem}{Problem}
\newcommand{\altketbra}[1]{\ketbra{#1}{#1}}
\newcommand{\FQ}{\mathbf{\mathsf{\mathbbm{F}_Q}}}
\renewcommand{\FP}{\mathbf{\mathsf{\mathbbm{F}_P}}}
\newcommand{\kb}[1]{\altketbra{#1}}
\renewcommand{\E}{\mathbb{E}}
\newcommand{\etal}{\emph{et al.~}}
\newcommand{\ie}{\textit{i.e.}}
\newcommand{\bb}{\mathscr{B}}
\newcommand{\TS}{T_{\mathrm{Shift}}}
\newcommand{\TPO}{T^{\mathrm{Phase}1}}
\newcommand{\TPT}{T^{\mathrm{Phase}2}}
\newcommand{\COMMENT}[1]{}
\newcommand{\ket}[1]{|#1\rangle}
\newcommand{\ketbra}[2]{|#1\rangle\langle#2|}
\def\01{\{0,1\}}
\newcommand{\braket}[2]{\langle{#1}|{#2}\rangle} % inproduct, < | >
\def\01{\{0,1\}}
\newcommand{\eps}{\varepsilon}
\newcommand{\zo}{\{0,1\}}
\newcommand{\cadre}[1]
{
	\begin{tabular}{|p{8.5cm}|}
		\hline
		\vspace*{-0.4cm} #1  \\
		\hline
	\end{tabular}
}
\begin{abstract}
%\end{abstract}
\begin{document}
	\title{Relativistic zero-knowledge protocol for $\NP$ over the internet unconditionally secure against quantum adversaries}
	%TODO Find a good title. Current (bad) proposals include
	%Breaking simple quantum position verification protocols with little entanglement
	% Low-dimensional attacks to practical QPV protocols
	% Breaking practical QPV protocols with a few ebits
	% Low dimensional instantaneous non local quantum measurements break simple practical quantum  position verification protocols
	
	% ----------- AUTHORS AND AFFILIATIONS -----------
	\author{André Chailloux}
	\email{andre.chailloux@inria.fr}
	\affiliation{Inria de Paris, EPI COSMIQ}
	
	\author{Yann Barsamian}
	\email{yann@barsamian.fr}
	\affiliation{Inria de Paris, EPI COSMIQ}
	
	% ------------ ABSTRACT ------------
	\begin{abstract}
		Relativistic cryptography is a proposal for achieving unconditional security that exploits the fact that no information carrier can travel faster than the speed of light. It is based on space-time constraints but doesn't require quantum hardware. Nevertheless, it was unclear whether this proposal is realistic or not. Recently, Alikhani \textit{et al.}~\cite{ABC+21} performed an implementation of a relativistic zero-knowledge for NP. Their implemented scheme shows the feasibility of relativistic cryptography but it is only secure against classical adversaries.
		In this work, we present a new relativistic protocol for $\NP$ which is secure against quantum adversaries and which is efficient enough so that it can be implemented on everyday laptops and internet connections. We use Stern's zero-knowledge scheme for the Syndrome Decoding problem, which was used before in post-quantum cryptography. The main technical contribution is a generalization of the consecutive measurement framework of~\cite{CL17} to prove the security of our scheme against quantum adversaries, and we perform an implementation that demonstrates the feasibility and efficiency of our proposed scheme. 
	\end{abstract}
	
	\maketitle
\section{Introduction}
\paragraph{Context.}
There is a strong conceptual and practical appeal for building cryptographic schemes which have unconditional security meaning that they cannot be attacked by any classical or quantum computer, even with unlimited computing power. Quantum Key Distribution~\cite{BB84} is a prime example of this, and a huge amount of work has been done to understand its security and perform efficient implementations. Relativistic cryptography is another proposal for achieving unconditional security that exploits the no superluminal signaling (NSS) principle. NSS states that no information carrier can travel faster than the speed of light. The interest of relativistic cryptography is that it can perform coin flipping and bit commitment protocols with unconditional security which is known to be impossible even using quantum information~\cite{May97,LC97}, so relativistic cryptography and quantum cryptography complete each other well for protocols with unconditional security. In order to perform protocols in relativistic cryptography, there has to be some strict space-time constraints between the different agents performing the protocol but they can be done without quantum hardware. The goal of this work is to show the practicality of relativistic cryptography by presenting a new relativistic zero-knowledge protocol for $\NP$ and demonstrating its feasibility in real-life conditions, on standard laptops using a standard internet connection. This is the first time a protocol for relativistic cryptography is implemented in this setting and shows these are much simpler to perform than what we could have expected. 

\paragraph{Relativistic cryptography.}

The idea of using the NSS principle for cryptographic protocols started in a  work by Kent~\cite{Kent99} as a way to physically enforce a no-communication constraint between different agents (a similar idea already existed in multi-prover interactive proofs\cite{BGKW88}, but without any explicit implementation proposal). The original goal of Kent was to bypass the no-go theorems for quantum bit-commitment, and there has been many proposals for unconditionally secure relativistic bit commitment\cite{Kent11,Kent12,KTH13}. The original idea of \cite{BGKW88} was also revisited by Cr\'epeau \etal in \cite{CSST11}. Based on this work, Lunghi \textit{et al.}~devised a relativistic bit commitment protocol involving only four agents, two for Alice and two for Bob \cite{LKB+15} - hereafter called the $\FQ$ relativistic bit commitment. Their protocol is secure against quantum adversaries and a multi-round variant, with longer duration time, was shown to be secure against classical adversaries \cite{LKB+15,CCL15,FF15}. While these protocols only seemed of theoretical interest at first, recent implementations have convincingly demonstrated that the required timing and location constraints can be efficiently enforced. In \cite{VMH+16}, the authors performed a $24$-hour-long bit commitment with the pairs of agents standing $8$km apart.
 \paragraph{Relativistic zero-knowledge protocols for $\NP$-complete problems.}
 One important application of commitment schemes is zero-knowledge protocols. It was first observed in~\cite{CL17} that one can use the single round $\FQ$ relativistic bit commitment scheme to construct a relativistic zero-knowledge protocol for the Hamiltonian cycle problem, which is $\NP$-complete with unconditional security even against entangled adversaries. The communication cost of this protocol however becomes quickly prohibitive and the necessary space-time constraints can't be ensured. A more recent proposal~\cite{CMS+20} constructs a variation over the standard $3$ coloring zero-knowledge protocol. They manage to drastically reduce the communication at each round. However, the number of repetitions required is quite large to obtain classical security and is prohibitively too large to obtain security against quantum adversaries. This proposal (the variant secure against classical adversaries) was recently implemented using some dedicated hardware \cite{ABC+21}.

In this letter, we present a new proposal for relativistic zero-knowledge for $\NP$ based on the Syndrome Decoding problem. This is an $\NP$-complete problem which is also believed to be hard against quantum computer for random instances. We will use here Stern's zero-knowledge scheme~\cite{Ste93} which was used before for post-quantum signature schemes with the $\FQ$ relativistic string commitment.
This protocol will have a moderate amount of communication, but also a small amount of rounds to decrease the soundness error. A comparison between the different schemes is presented in Table \ref{Table1}. 

\begin{table}
	\footnotesize{
		\begin{center}
			\begin{tabular}{|l|c|c|c|c|}
				\hline
				& \#Bytes/Round & \#Repetitions & \# Provers & Quantum Sec \\
				\hline
				\cite{CL17} & $1.89$~MB & 100  & 2 & \checkmark \\
				\cite{ABC+21} & $2$~B & $10^{6}$ & 2 & $\times$ \\
				\cite{ABC+21} & $2$~B & $10^{19}$ & 3 & $\checkmark$ \\
				This work & $17.03$~KB & 340 & 2 & \checkmark  \\
				\hline 
			\end{tabular}
			\caption{Parameters for different zero-knowledge proposals for $100$ security bits (see Appendix \ref{Appendix:Comparison} for more details).}
			\label{Table1}
	\end{center}
}
\end{table}

Our main technical contribution is to prove the security of this protocol against quantum adversaries. In order to do so, we relate its security to an entangled game and prove a lower bound on this game using a new quantum learning lemma on consecutive quantum measurements, in the similar vein of \cite{CL17}. We then implement the key steps of this protocol and show it is efficient enough so that the space-time constraints can be satisfied using standard computers and a standard internet connection. The only specific hardware we require are synchronized clocks.

\section{Preliminaries}
\paragraph{The relativistic $\FQ$ string commitment.}
We recall here the relativistic $\FQ$ string commitment that we will use in our relativistic zero-knowledge protocol. 
We consider a prover $\pp$ that wants to commit a string $z \in \FP$ to a verifier $\vv$. Both parties, have agents respectively $\mathcal{P}_1, \mathcal{P}_2$ and $\mathcal{V}_1, \mathcal{V}_2$. $\pp_1,\vv_1$ are at a certain spatial location and $\pp_2,\vv_2$ at a different spatial location, at a distance $D$ from the first one. The committed string is in $\FP$ and we also consider a set $\FQ \supseteq \FP$, where $Q$ is a parameter of the commitment scheme. 
The protocol (when followed by honest players) consists of 3 phases: preparation, commit,and reveal. The string commitment protocol goes as follows. 
\begin{enumerate}
	\item \emph{Preparation phase}: $\pp_1,\pp_2$ (resp.~$\vv_1,\vv_2$) share a random number $a\in \FQ$ (resp.~$b \in \FQ$).
	\item \emph{Commit phase}: $\vv_1$ sends $b$ to $\pp_1$, who immediately returns $y = a + z *b$ where $z \in \FP$ is the committed string. We map $z \in \FP$ as an element of $\FQ$, since $\FP \subseteq \FQ$, the operations + and * used are those of $\FQ$.
	\item \emph{Reveal phase}: $\pp_2$ reveals the values of $z$ and $a$ to $\vv_2$ who checks that $y = a + z*b$.
\end{enumerate}
This protocol has the following timing properties: let $\tau_1$ the time when $\vv_1$ sends $b$ and $\tau_2$ the time when $\vv_2$ receives $(z,a)$. If $\tau_2 - \tau_1 < D/c$ where $c$ is the speed of light then the NSS principle ensures that the message $(z,a)$ is independent of $b$. The following security properties were proven in~\cite{LKB+15}: 
\begin{itemize}
	\item It is \emph{perfectly hiding}: the verifiers don't have any information about $z$ after the commit phase.
	\item It is \emph{binding}: informally, the provers can change their mind about $z$ after the commit phase only with vanishingly small probability. 
\end{itemize}
 
\paragraph{The syndrome decoding problem.}
The Hamming weight $|\vvv|_H$ of a binary  vector is the number of $1$ coordinates of this vector.

\begin{problem}[Syndrome Decoding - \textup{SD}$(n,k,w)$]\label{prob:decoGenR}	
	$ \ $ 
	\begin{itemize}		\setlength\itemsep{-0.5em}
		\item \textup{Instance:} a matrix $\Hm \in \zo^{(n-k)\times n}$, a column vector $\sv \in \zo^{n-k}$,
		\item \textup{Goal:} output a column vector $\ev\in \zo^n$ such that $\Hm\ev= \sv$ and  $|\ev|_H = w$.
	\end{itemize}
\end{problem}

The Syndrome Decoding problem is $\NP$-complete and also believed to be hard on random instances even against quantum computers. It is the canonical hard problem for code-based cryptography. In order to construct a zero-knowledge protocol for this scheme, we first have to split the instances of our problem into Yes instances and No instances. For the $\SD(n,k,w)$ problem, Yes instances are the pairs $(\Hm,\sv)$ such that a solution ({\ie} a vector $\ev \in \zo^n$ st. ${\Hm}\ev = \sv$ and  $|\ev|_H = w$) exists. No instances are the pairs $(\Hm,\sv)$ where no such solution exists.

\section{Our proposal for relativistic zero-knowledge for $\NP$}
\subsection{Brief definition of a zero-knowledge scheme}

In a zero-knowledge protocol between a prover $\pp$ and verifier $\vv$, they are given an instance of a computational problem which is either a Yes or a No instance. $\pp$ wants to convince $\vv$ that they are in a Yes instance but he doesn't want to reveal any other information to $\vv$. Zero-knowledge protocols have many applications in cryptography, for example for identification schemes. If we start from a Yes instance and both players are honest then $\vv$ should be convinced and always accept (\emph{Completeness}). If $\pp$ is honest then $\vv$ shouldn't learn anything more from its interaction with $\pp$ than the fact that they have a Yes instance (\emph{Zero-knowledge}). If we start from a No instance and for any cheating prover $\pp$, $\vv$ should reject with high probability (\emph{Soundness}).  The honest prover could in theory be computationally unbounded, but in our case, we only require a polynomial time prover, which additionally knows a solution to the problem for Yes instances. We give him this solution in an Auxiliary input. However, cheating provers stay computationally unbounded. 

\subsection{Description of our $1$-round relativistic zero-knowledge protocol for $\NP$}
We combine the $\FQ$ relativistic string commitment and Stern's $1$-round zero-knowledge protocol for $\SD$ in order to get our $1$-round relativistic zero-knowledge protocol for $\SD$. Again, $\pp$ and $\vv$ are split into $2$ agents $\pp_1,\pp_2$ and $\vv_1,\vv_2$. In the honest case, we require $\vv_1,\vv_2$ to be at some distance $D$. We present this protocol in Figure \ref{Figure:1round}. This description is self-contained but we discuss more in length this protocol as well as Stern's original zero-knowledge protocol in Appendix \ref{Appendix:Stern}, which can be a good start for those not familiar with the scheme. The timing constraints ensure that for each $i \in \{1,2\}$, the message sent by  $\pp_i$ is independent of the message sent from $\vv_j$ to $\pp_j$ for $j \neq i$.

\begin{figure}[!h]
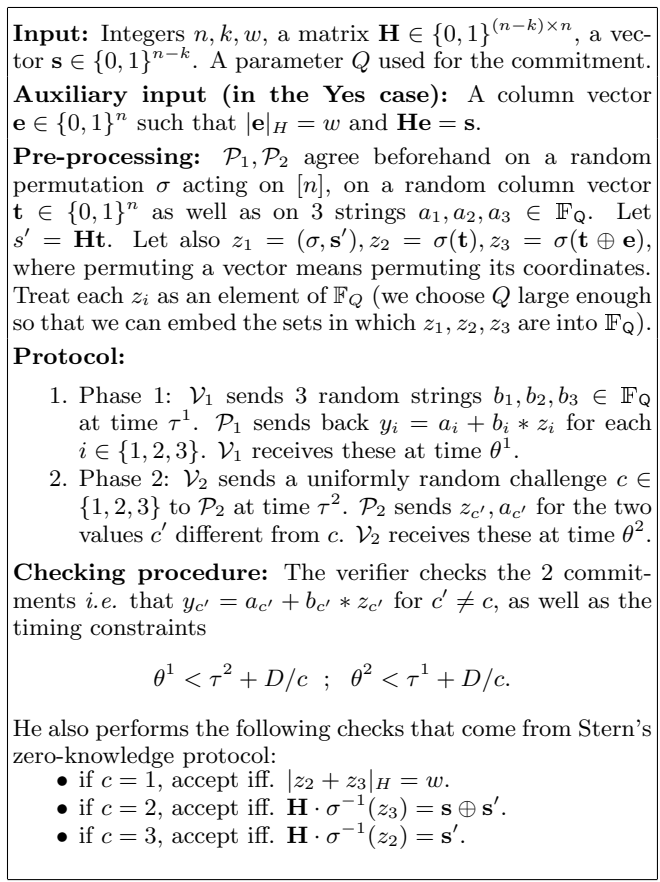

	\cadre{\vspace*{0.2cm}
	\textbf{Input:} Integers $n,k,w$, a matrix $\Hm \in \zo^{(n-k) \times n}$, a vector $\sv \in \zo^{n-k}$. A parameter $Q$ used for the commitment. \\
	\textbf{Auxiliary input (in the Yes case):} A column vector $\ev \in \zo^n$ such that $|\ev|_H = w$ and $\Hm  \ev = \sv$.\\
	\textbf{Pre-processing:} $\pp_1,\pp_2$ agree beforehand on a random permutation $\sigma$ acting on $[n]$, on a random column vector $\tv \in \zo^n$ as well as on $3$ strings $a_1,a_2,a_3 \in \FQ$. Let $s' = \Hm  \tv$. Let also $z_1 = (\sigma ,\sv'), z_2 = \sigma(\tv), z_3 = \sigma(\tv \oplus \ev)$, where permuting a vector means permuting its coordinates. Treat each $z_i$ as an element of $\F_{Q}$ (we choose $Q$ large enough so that we can embed the sets in which $z_1,z_2,z_3$ are into $\FQ$). \\
	\textbf{Protocol:}
	\begin{enumerate}\setlength\itemsep{-0.2em}
		\item  Phase $1$: $\vv_1$ sends $3$ random strings $b_1,b_2,b_3 \in \FQ$ at  time $\tau^1$. $\pp_1$ sends back $y_i = a_i + b_i*z_i$ for each $i \in \{1,2,3\}$. $\vv_1$ receives these at time $\theta^1$.
		\item Phase $2$: $\vv_2$ sends a uniformly random challenge $c \in \{1,2,3\}$ to $\pp_2$ at time $\tau^2$. $\pp_2$ sends $z_{c'},a_{c'}$ for the two values $c'$ different from  $c$. $\vv_2$ receives these at  time $\theta^2$.
	\end{enumerate}
\textbf{Checking procedure:} The verifier checks the $2$ commitments {\ie } that $y_{c'} = a_{c'} + b_{c'}*z_{c'}$ for $c' \neq c$, as well as the timing constraints $$\theta^1 < \tau^2 + D/c \ \ ; \ \ \theta^2 < \tau^1 + D/c.$$ He also performs the following checks that come from Stern's zero-knowledge protocol:
		\vspace*{-0.2cm}
		\begin{itemize}\setlength\itemsep{-0.2em}
			\item if $c = 1$, accept iff. $|z_2 + z_3|_H = w$.
			\item if $c=2$,  accept iff. ${\Hm} \cdot \sigma^{-1}(z_3) = \sv \oplus \sv'.$ 
			\item if $c=3$, accept iff.  ${\Hm} \cdot \sigma^{-1}(z_2) = \sv'.$
		\end{itemize}
}
\caption{$1$-round Relativistic zero-knowledge protocol for {\SD} using the $\FQ$ commitment scheme. }
\label{Figure:1round}
\end{figure}

We prove the security of this scheme. Completeness and the zero-knowledge property follow quite directly from the security of Stern's signature scheme and of the $\FQ$ commitment scheme. The main technical contribution of this work is to bound the soundness of this protocol. We prove the following:

\begin{theorem}
This protocol has perfect completeness, perfect zero-knowledge and has soundness $\frac{2}{3} + \left(\frac{n!2^{4n}}{Q}\right)^{1/4}$ if the space-time constraints are satisfied.
\end{theorem}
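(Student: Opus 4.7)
The three claims decompose cleanly; only soundness requires genuinely new work.

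\emph{Completeness} is immediate. In an honest run the openings $y_i=a_i+b_i\cdot z_i$ pass the commitment checks by construction, and using $z_2\oplus z_3=\sigma(\tv)\oplus\sigma(\tv\oplus\ev)=\sigma(\ev)$ one verifies $|z_2\oplus z_3|_H=w$, $\Hm\sigma^{-1}(z_3)=\sv'\oplus\sv$, and $\Hm\sigma^{-1}(z_2)=\sv'$, covering all three challenges. The timing constraints hold by assumption on the geometric setup.

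\emph{Zero-knowledge} is proved via a simulator that guesses the challenge $c^\ast\in\{1,2,3\}$ uniformly and prepares fake openings $z_{c'}$ for $c'\neq c^\ast$ passing the relevant algebraic check without using $\ev$: for $c^\ast=1$, pick any $z_2,z_3$ with $|z_2\oplus z_3|_H=w$; for $c^\ast=2$, pick random $\sigma$ and $\tv'$ and set $z_3=\sigma(\tv')$, $\sv'=\sv\oplus\Hm\tv'$; for $c^\ast=3$, pick random $\sigma$ and $\tv$ and set $z_2=\sigma(\tv)$, $\sv'=\Hm\tv$. The value $z_{c^\ast}$ is committed to an arbitrary element of $\FQ$. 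Because the $\FQ$ commitment is perfectly hiding, conditioned on $(b_1,b_2,b_3)$ the tuple $(y_1,y_2,y_3)$ is uniform over $\FQ^3$ regardless of the committed values, so the simulated transcript is identically distributed to the honest one whenever the verifier's challenge equals $c^\ast$; standard rewinding converts this into a perfect simulation.

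\emph{Soundness} is the heart of the matter. The NSS constraints $\theta^1<\tau^2+D/c$ and $\theta^2<\tau^1+D/c$ force $\pp_1$'s reply to be causally independent of $c$ and $\pp_2$'s reply to be causally independent of $(b_1,b_2,b_3)$, so any quantum cheating strategy factors as a pair of isometries acting on a shared entangled state $\ket{\psi}$, reducing soundness to bounding the value of a two-player entangled game. The binding property of the $\FQ$ commitment then associates to $\pp_1$'s share a ``classical'' triple of committed values consistent with the observed $y_i$, up to a binding error bounded by the number of candidate values for $(z_1,z_2,z_3)$ divided by $Q$, giving $\eps_{\mathrm{bind}}\leq n!\,2^{4n}/Q$. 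On $\pp_2$'s side the response to $c$ is modeled by a POVM $\{E_c\}_{c\in\{1,2,3\}}$ applied to the remaining share.

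The main technical step generalizes the consecutive-measurement framework of~\cite{CL17}: assuming a cheating probability $\tfrac{2}{3}+\delta$, one applies $E_1,E_2,E_3$ successively to the same post-commit state and argues that with probability polynomial in $\delta$ all three measurements accept simultaneously, producing openings that answer every challenge on a single commitment. Any such triple jointly implies $|z_2\oplus z_3|_H=w$ and $\Hm\sigma^{-1}(z_2\oplus z_3)=\sv$, so $\sigma^{-1}(z_2\oplus z_3)$ is a weight-$w$ preimage of $\sv$, contradicting the No-instance hypothesis. The hard part is the quantum learning lemma itself: each POVM perturbs $\ket{\psi}$, and tracking these perturbations through three consecutive measurements introduces two stacked gentle-measurement-type square-root losses which, when balanced against $\eps_{\mathrm{bind}}$, yield the $1/4$ exponent and the bound $\tfrac{2}{3}+(n!\,2^{4n}/Q)^{1/4}$.
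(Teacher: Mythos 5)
Your completeness and zero-knowledge arguments match the paper's. For soundness you correctly identify the overall architecture — NSS constraints reduce soundness to bounding the entangled value of a two-prover game, and the bound comes from a consecutive-measurement argument in the spirit of~\cite{CL17} with three challenges. However, the key quantitative step is mis-attributed, and this hides the actual technical content of the proof.

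You claim the factor $n!\,2^{4n}$ arises as a binding error (``number of candidate values for $(z_1,z_2,z_3)$ divided by $Q$''). That count is off — the natural count of committed tuples $(\sigma,\sv',z_2,z_3)$ is $n!\,2^{3n}$, not $n!\,2^{4n}$ — and, more importantly, it is not where the factor comes from in the proof. In the paper the binding step yields only $1/Q$: if Bob's three consecutive measurements all accept, then (by Proposition~\ref{Proposition:3ss} in the No case) some index $c'$ must have two distinct decommitments $z^{c_1}_{c'}\neq z^{c_2}_{c'}$, which lets Bob solve for $b_{c'}$, an event of probability $1/Q$ by non-signalling. The factor $n!\,2^{4n}$ instead enters through the consecutive-measurement lower bound $E^{BX}\ge \frac{9(V^{BX}-2/3)^4}{2\,|W_1|\,|W_2|}$ (Proposition~\ref{Proposition:Hard}), where $|W_1|\le 2^{2n}$ and $|W_2|\le 2^{2n}n!$ count Bob's admissible winning answers for challenges $1$ and $2$, and the product appears because Lemma~\ref{Lemma:FromCL17} is applied twice to coarse-grain the fine-grained projectors into $Q^c_W$. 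Combining $\E_{BX}[E^{BX}]\le 1/Q$ with that bound and convexity yields the $1/4$-power. Your ``two stacked gentle-measurement square-root losses'' heuristic gestures at the $\eps^4$ but does not substitute for the actual argument, which in the paper is a trigonometric lemma (Lemma~\ref{lemma:TrigoIneq}) controlling the angle between $P_2P_1\ket{\Omega}$ and the original state and then chaining the angle through $P_3$; a naive gentle-measurement bound does not give the needed $\cos^2(\alpha_1)+\cos^2(\alpha_2)-1$ lower bound on the overlap. So the plan is right, but the lemma that actually carries the weight — bounding three consecutive projective measurements via angles and convexity — is missing, and without it the claimed exponent and the $n!\,2^{4n}$ factor are asserted rather than derived.
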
 
This means that for No instances, an all powerful cheating prover can convince the verifier wp. at most $\frac{2}{3} + \left(\frac{n!2^{4n}}{Q}\right)^{1/4}$.
By taking $Q = 10^{12} n! 2^{4n}$, the soundness becomes $\frac{2}{3} + 0.001$. This seems like a very large $Q$ but recall that sending an element of $\F_Q$ requires $\log_2(Q)$ and performing additions and multiplications in a field of this size is still very efficient. We give the full proof of this Theorem in Appendix~\ref{Apendix:SecurityProof}.

\COMMENT{The proof of Proposition \ref{Proposition:Soundness}, but we will sketch here the main ideas. In the NO case, an adversary cannot answer the $3$ challenges simultaneously in Stern's zero-knowledge protocol without breaking the commitment scheme (see Proposition \ref{Proposition:3ss}).

Equivalently, we can see that for a fixed question/answer pair given to $P_1$, $P_2$ cannot \emph{simultaneously} answer the $3$ challenges without being able to decommit to $2$ different values for the same commitment. From the property of the $\F_Q$ string commitment, this can happen with probability at most $\frac{1}{Q}$.

However, this doesn't necessarily say anything about the entangled value of  $G^Q_{\textrm{Stern-Rel}}$. It could be that when sharing entanglement, the players can win the game with probability $1$ even though they can't answer all the questions simultaneously. Such a paradoxical feature appears for example in the Magic Square game\cite{TODO}.}

\section{Full protocol and implementation}
Our full loss-tolerant relativistic zero-knowledge protocol for $\NP$ is described in Figure \ref{Figure:FullProtocol}. We repeat our $1$-round protocol $R$ times sequentially and allow for a $\lambda$ fraction of rounds where the space-time constraints are not satisfied, for eg. because of losses in the signal. We  extend our security proof to this full protocol in Appendix~\ref{Appendix:FullProtocol}.

\paragraph{Timing constraints.}
We added an extra parameter $T_{shift}$ that will make the space-time constraints easier to satisfy. 
For  round $i$, let $T^{\textrm{Phase}1}_i \eqdef \theta^1_i - \tau^1_i$ and $T^{\textrm{Phase}2}_i \eqdef \theta^2_i - \tau^2_i$. In phase $1$, $\vv_1$ sends $3$ strings in $\FQ$, $\pp_1$ does a computation and sends back $3$ strings in $\FQ$. In phase $2$, $\vv_2$ sends a challenge in $\{1,2,3\}$ and gets back $2$ messages in $\FQ$. This explains why phase $1$ is longer than phase $2$. The timing constraints become for each $i$:
\begin{align}
	\theta^1_i < \tau^2_i + D/c \quad \Rightarrow \quad T^{\textrm{Phase}1}_i - \TS < D/c \\
	\theta^2_i < \tau^1_i + D/c \quad \Rightarrow \quad T^{\textrm{Phase}2}_i + \TS < D/c 
\end{align}
Here, we see why we use $\TS$. Since the $2$ phases take different times, the first constraint would be harder to achieve than the second one with $\TS = 0$. By taking $\TS$ to be an estimate of $\frac{1}{2}\left(\TPO_i - \TPT_i\right)$ for an average $i$, we make the two constraints essentially equally hard to satisfy.

\begin{figure}[!t]
\cadre{
	\vspace*{0.2cm}
\textbf{Parameters:} (n,k,w) for the SD problem. A parameter $Q$ for the commitment used. A parameter $D$ gives the distance between the $2$ verifiers, and a time parameter $\Delta_T$ to delimit the time of a round, a time parameter $\TS$ to determine the time shift between the $2$ phases of the protocol.  A number of rounds $R$ and an allowed fraction of losses $\lambda$.

\begin{enumerate}
\item The $2$ provers and verifiers agree together on an initial time $T_1$ on which they start the protocol. 
\item For i from $1$ to $R$: run the $1$-round relativistic ZK protocol with the $\FQ$ commitment scheme. $\vv_1$ sends his first message at time $\tau^1_i \eqdef  T_1 + (i-1)*\Delta_T$, and $\vv_2$ sends his first message at time $\tau^2_i \eqdef  T_1 + (i-1)*\Delta_T + \TS$. Let $\theta^1_i$ the time at which $\vv_1$ receives the message from $\pp_1$ and  $\theta^2_i$ the time at which $\vv_2$ receives the message from $\pp_2$.
\item At the end of the protocol, the verifiers check the space-time constraints for each $i$ from $1$ to $R$, {\ie} check that $\theta^1_i < \tau^2_i + D/c$ and $\theta^2_i < \tau^1_i + D/c$. Let $F$ be the number of rounds where these space-time constraints are not satisfied. 
\item The verifiers accept if they accept each iteration of the zero-knowledge protocol when the space-time constraints were satisfied and if $F \le \lceil \lambda R\rceil$.
\end{enumerate}
}
\caption{Full loss-tolerant relativistic zero-knowledge protocol for $\NP$}
\label{Figure:FullProtocol}
\end{figure}
\paragraph{Our two scenarios.}

We perform a demonstration of this full scheme using only regular laptops as well as standard network links (ethernet or wifi). We run the experiment in $2$ different scenarios.
% described in Figure \ref{Figure:Scenario12}: 
\begin{enumerate}
	\item $\vv_1$ and $\pp_1$ are in the same room and are connected through a direct ethernet cable. $\vv_2$ and $\pp_2$ are in a different location but also connected through an ethernet cable. The distance between $\vv_1$ and $\vv_2$ is about $400km$
	\item $\vv_1$ and $\pp_1$ (resp. $\vv_2,\pp_2$) are in different cities and communicate through the usual internet. For each $i$, $\vv_i,\pp_i$ are about $400$km away. We put $\vv_1,\vv_2$ at distance about $9000km$.
\end{enumerate}

These scenarios are illustrated by the following, with examples of cities for which these constraints are satisfied, see Figure \ref{Figure:Scenario12}.

\begin{figure}[!h]
\includegraphics[width = 8cm]{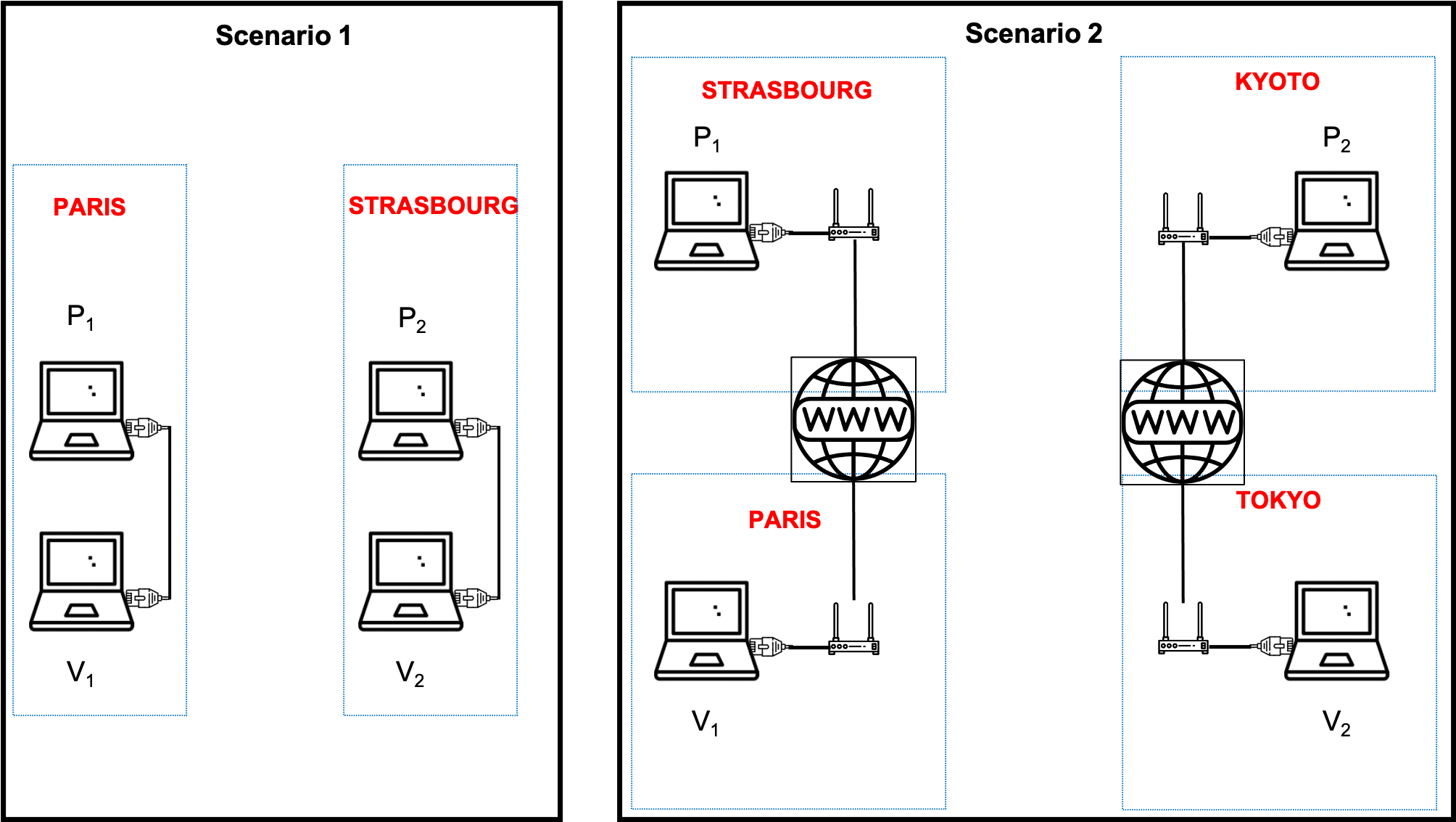}
\caption{Scenarios that we consider for which we demonstrate the feasibility of our full relativistic zero-knowledge protocol for $\NP$.}
\label{Figure:Scenario12}
\end{figure} 

\vspace*{-0.1cm}
\paragraph{Specific implementation parameters.}
Our main protocol that achieves $100$ bits of quantum security has the following parameters that appear in the two scenarios.
\begin{align*}
n & = 1704, \ k = 769, w = 216 \\ R & = 340, F = {22}, \ Q = 2^{23209} - 1, \  \omega^*(\GRS)  \le \frac{2}{3} + 2^{-138} 
\end{align*}
Let $D$ be the distance between $\vv_1$ and $\vv_2$ and let $D'$ be the distance between $\vv_1,\pp_1$ (and also between $\vv_2,\pp_2$). Depending on our scenario, we have the following parameters; where $c \approx 299.8km/sec$ in the speed of light in vacuum. 
\begin{enumerate}
	\item Scenario 1: $D = 400km$, $D' = 10m$,  $D/c \approx 1.33ms$, $\ \Delta_T = 2ms$, $\ T_{shift} = 0.5ms$. With these parameters, the space-time constraints are satisfied for $T^{Phase1}_i < 1.83ms$ and $T^{Phase2}_i < 0.83ms$.
	\item Scenario 2: $D = 9000km$, $D' = 400km$, $D/c \approx 30ms$, $\Delta_T = 40ms$, $T_{shift} = 2.5ms$. With these parameters, the space-time constraints are satisfied for $T^{Phase1}_i < 32.5ms$ and $T^{Phase2}_i < 27.5ms$.
\end{enumerate}

We show in Figures \ref{Figure:TimingsScenario1} and \ref{Figure:TimingsScenario2} the real running times of the different phases. 

\begin{figure}[!t]
	\includegraphics[width = 8cm]{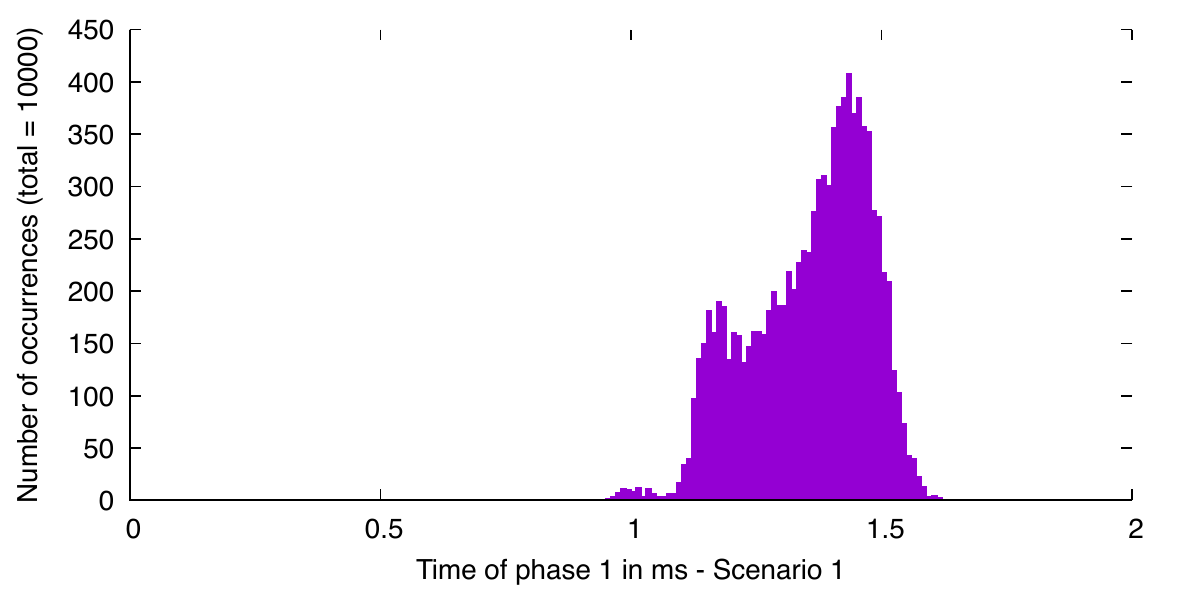}
	\includegraphics[width = 8cm]{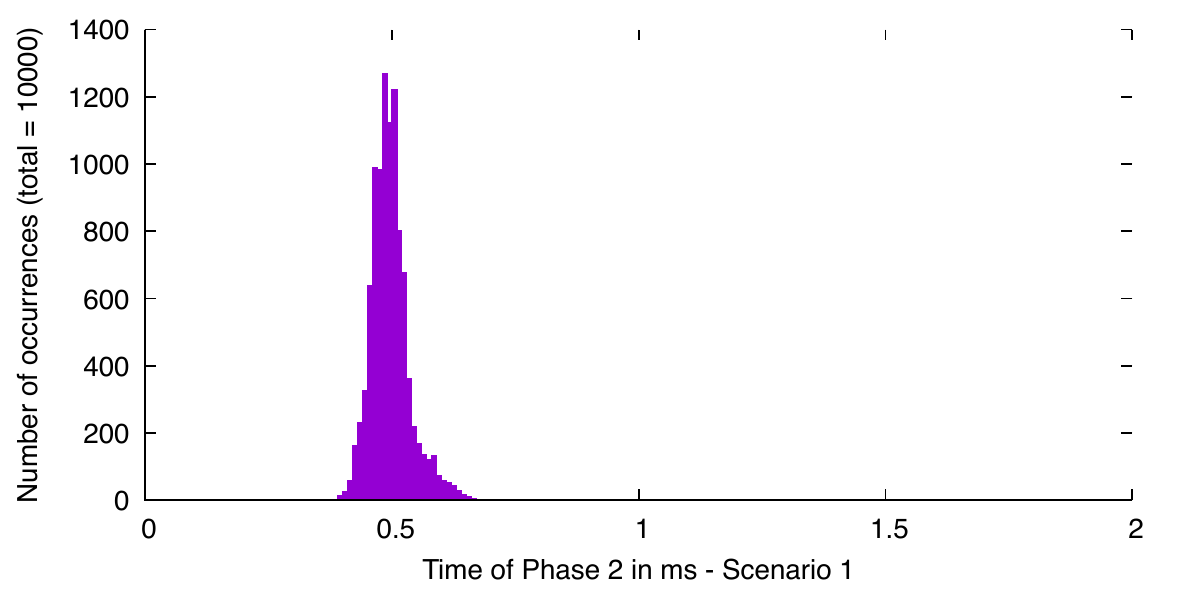}
	\caption{$\TPO$ and $\TPT$ for Scenario $1$ with $10000$ rounds, times are aggregated in intervals of size $0.01ms$.}
	\label{Figure:TimingsScenario1}
\end{figure}

\begin{figure}[t]
	\includegraphics[width = 8cm]{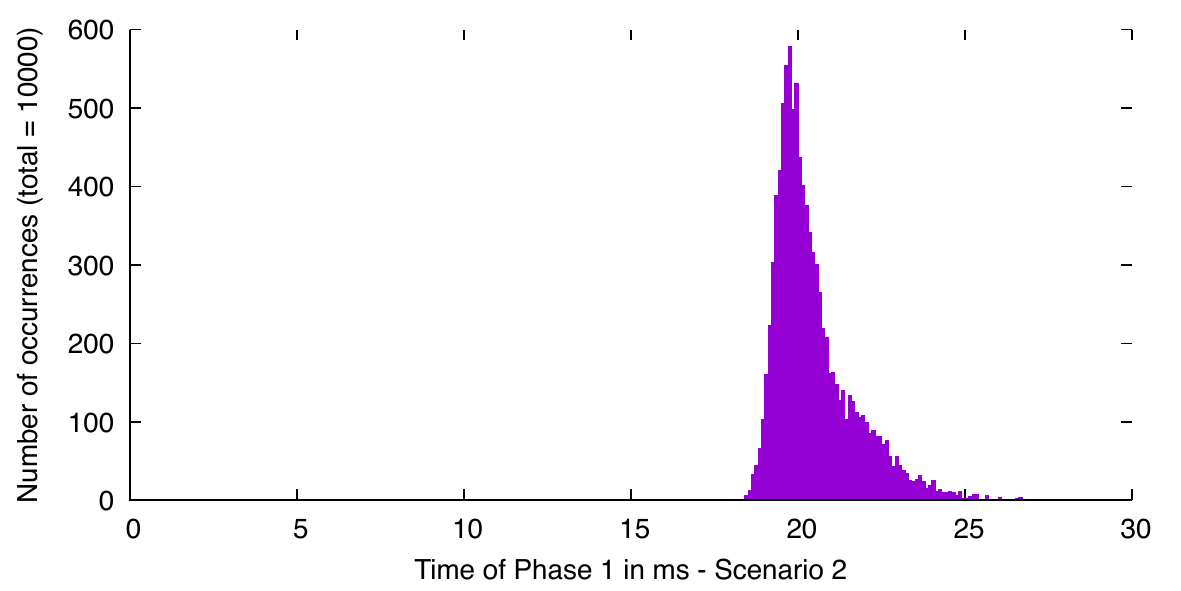}
	\includegraphics[width = 8cm]{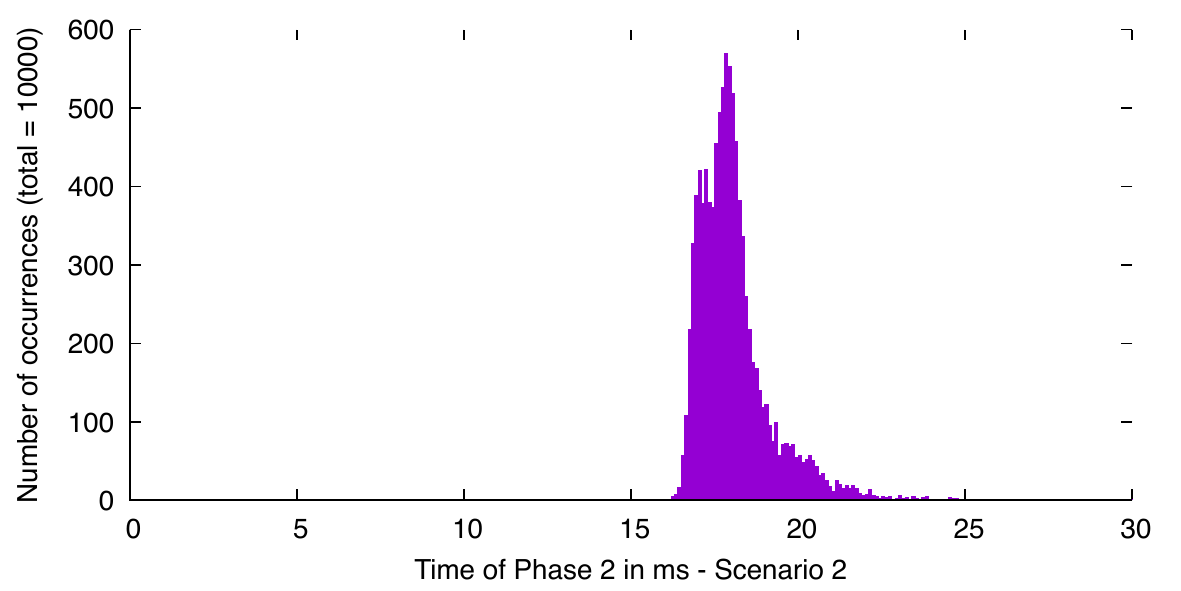}
	\caption{$\TPO$ and $\TPT$ for Scenario $2$ with $10000$ rounds, times are aggregated in intervals of size $0.1ms$.}
	\label{Figure:TimingsScenario2}
\end{figure}

In order to prove soundness, the probability that a cheating prover succeeds in the No case is bounded by $P^*(R,F)$. If we use Equation \ref{Equation:D2} from Appendix~\ref{Appendix:FullProtocol} with $\omega^*(\GRS) \le \frac{2}{3} + 2^{-138}$ and $\lambda = \frac{22}{340}$, we obtain $P^*(R,F) \le 2^{-103}$. For the completeness error, we estimate the probability that the space-time constraint is not satisfied with $p_{\text{loss}} \le \frac{1}{1000}$, which is larger than what we actually observe. With this estimate, if we define $CE(R,F,p_{\text{loss}})$ the probability of failure in the honest case, we obtain from Equation \ref{Equation:D3} that $CE(R,F,p_{\text{loss}}) \le 2^{-102}$. We therefore get the following results, valid for the $2$ scenarios:

\begin{theorem}
	In our experiments, the probability that the verifier rejects an honest run of the protocol is $2^{-102}$ (Completeness error), the soundness is $2^{-103}$ and it is perfect zero-knowledge.
\end{theorem}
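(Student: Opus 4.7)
The proof is an instantiation of three ingredients already developed (or stated) in the paper: perfect completeness and perfect zero-knowledge of the single-round scheme of Figure~\ref{Figure:1round}, together with the sequential-repetition bounds of Appendix~\ref{Appendix:FullProtocol} (Equations~\ref{Equation:D2} and~\ref{Equation:D3}). The plan is to address the three properties in turn and then simply substitute the concrete parameters listed in the implementation section.

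For perfect zero-knowledge, I would build a standard rewinding simulator: given black-box access to any malicious $\mathcal{V}^*$, it samples a uniform guess $\tilde{c} \in \{1,2,3\}$, prepares three commitments that it can later open consistently for the two slots $c' \neq \tilde{c}$ (this is always possible from the simulator side because only two of the three $z_i$ need to be well-formed to pass Stern's check for a given challenge), and rewinds if $\mathcal{V}^*$ issues a challenge $c \neq \tilde{c}$. Since the $\FQ$ string commitment is \emph{perfectly} hiding, the Phase-$1$ transcript carries no information on $\tilde{c}$; since for each $c$ the two revealed strings in Phase~$2$ of Stern's protocol are individually uniformly distributed conditional on the instance, the simulated openings match the honest distribution exactly. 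This gives perfect zero-knowledge per round, and sequential composition of perfect zero-knowledge is perfect zero-knowledge.

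For completeness, an honest run on a Yes instance passes every round's Stern checks deterministically, so the only source of rejection is the loss-tolerance test $F \le \lceil \lambda R \rceil$. Modelling each round's space-time violation as an independent Bernoulli variable with failure probability at most $p_{\text{loss}} \le 10^{-3}$, Equation~\ref{Equation:D3} (a Chernoff/Hoeffding-type tail bound) gives $CE(R,F,p_{\text{loss}}) \le 2^{-102}$ with $R = 340$ and $F = 22$. For soundness, the single-round entangled value $\omega^*(\GRS) \le \tfrac{2}{3} + 2^{-138}$ is already the main technical output of the paper, coming from the generalized consecutive-measurement framework. Equation~\ref{Equation:D2} then lifts this per-round bound to $R = 340$ sequential repetitions allowing up to $F = 22$ space-time losses, still against an entangled quantum prover; plugging $\omega^*(\GRS) \le \tfrac{2}{3} + 2^{-138}$, $R = 340$, and $\lambda = 22/340$ into that expression yields $P^*(R,F) \le 2^{-103}$.

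The real obstacle in this chain is not the final arithmetic but the two inputs it relies on: the single-round quantum bound $\omega^*(\GRS) \le \tfrac{2}{3} + 2^{-138}$, which requires the consecutive-measurement lemma, and Equation~\ref{Equation:D2}, which has to handle sequential repetition against a quantum prover who may maintain entanglement across rounds and adaptively decide where to incur losses. Granting both, the stated theorem reduces to substitution of the concrete parameters, so the only task remaining in the main proof is to justify these plug-ins.
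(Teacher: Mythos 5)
Your proposal matches the paper's proof: it plugs $R=340$, $F=22$, $\lambda=22/340$, $p_{\text{loss}}\le 10^{-3}$, and $\omega^*(\GRS)\le \tfrac{2}{3}+2^{-138}$ into Equations~\ref{Equation:D2} and~\ref{Equation:D3}, and derives perfect zero-knowledge from the perfect hiding of the $\FQ$ commitment together with Stern's zero-knowledge property and sequential composition, which is exactly the argument sketched in the text surrounding the theorem and in Appendices~\ref{Apendix:SecurityProof} and~\ref{Appendix:FullProtocol}. Your added rewinding-simulator description is a valid unpacking of the paper's one-line zero-knowledge justification, not a different route.
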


We performed the benchmarks with our working laptops and desktops both in Paris and Strasbourg. In this first scenario, we actually connected two desktops from the same local network of our research group. In the second scenario, we performed a communication between our local laptop in Paris and a distant desktop in Strasbourg, which is $400km$ away. We use this setup both for analyzing the running time of our protocol between $\pp_1,\vv_1$ and $\pp_2,\vv_2$ in the $2$ scenarios. While we didn't perform aa fully integrated implementation into a larger cryptographic protocol, the results we obtain show the feasibility and practicality of our relativistic scheme. Also, it is quite flexible on the locations of the verifiers, as shows by our $2$ scenarios and we don't use dedicated hardware for the communication and the computation so one could have even more flexibility with better hardware but again, our goal was to show that this protocol can be implemented without specific hardware. Experimental hardwares are Intel Xeon E5-2650 v3 @2.3~GHz (Haswell) and 
Intel Core i5-6300U CPU @2.4~GHz (Skylake).
Our \texttt{C} code does not use parallelism, and was compiled using the GNU C
Compiler~7.5.0, and the GNU Multiple Precision Arithmetic Library (\url{https://gmplib.org/}) to perform arithmetic operations in
$\mathbb{F}_Q$ (with $Q = 2^{23~209} - 1$ in our example).

We also did experiments for other values of $n$ to show to what extent $n$ can be increased before the space-time constraints are not verified anymore. We present our data with increased $n$ in Appendix \ref{Appendix:OtherValues} and if one requires a larger security, it is always possible to make $\vv_1,\vv_2$ farther away or to use better hardware. 

\textbf{Acknowledgments.} AC and YB were supported by ANR DEREC $<$ANR-16-CE39-0001-01$>$.

{\footnotesize \vspace*{-0.5cm} \bibliography{paper}}
\bibliographystyle{alpha}

\newpage
\begin{appendix} $ \ $ \newpage
	\section{Comparison between the different existing schemes}\label{Appendix:Comparison}
	
	The authors of \cite{ABC+21} use parameters for which they have $100$ bits of security. This means both the underlying $\NP$-instance should require time $2^{100}$ to solve using the best quantum algorithms and the soundness should be $2^{-100}$.  We will use this benchmarking for comparing the different schemes. 
	
	\begin{itemize}
		\item  For \cite{CL17}. The best algorithm for Hamiltonian cycle on a graph $G$ with $n$ vertices runs in time $O(1.657^n)$\cite{Bjo14} so we need $n \ge 138$ in order to achieve $100$ bits of security. The best running time is actually performed by a classical algorithm, we don't know a better quantum algorithm for this problem. This protocol requires to commit each bit of the upper triangle of the adjacency matrix of $G$ using the relativistic $\F_Q$ bit commitment scheme with $Q > 10000n!$ (to have soundness close to $1/2$) so each round of communication requires at least $2*\frac{n(n-1)}{2}\log_2(10000*Q)$ bits of communication. $\frac{n(n-1)}{2}$ is the number of bits in the upper triangle of the adjacency matrix. The factor $2$ comes from the fact that the verifier first sends $\frac{n(n-1)}{2}$ elements of $\FQ$ and then the prover sends back $\frac{n(n-1)}{2}$ elements of $\FQ$. For $n = 138$, this gives a communication of  $1.51 \cdot 10^7 $ bits which is approximately $1.89MBytes$. In order to achieve soundness of $2^{-100}$, we use $100$ rounds since each round has soundness $\frac{1}{2}$.
		\item The protocol of \cite{CMS+20} and its implementation in \cite{ABC+21} uses a graph $G$ with $588$ vertices and $1097$ edges. The communication is essentially sending a edge which is of size less than $2$ bytes. In order to achieve a soundness of $2^{-100}$, the number of repetition they use is $10^6$. These parameters achieve classical security. If we wanted to achieve quantum security, \cite{ABC+21} claims that this would require $(21*|E|)^4*100$ which is more than $10^{19}$, as well as a third prover/verifier pair. 
		\item The best quantum algorithm for random instances of the syndrome decoding problem requires time at least $2^{0.05869n}$ \cite{KT17} for $k \approx 0.4514n$ and $w \approx 0.1268n$.  Our protocol uses $n = 1704, k = 769, w = 216$ in order to have $100$ bits of security. At each round, we have to commit to $3$ strings of $\FQ$ which means the communication at each round is $6\log_2(Q)$. We can prove security of our scheme by taking $Q = 10^{12}n!2^{4n}$, so the communication is $6\log_2(Q) = 136177$ bits which is equal to $17.03KB$. The number of rounds we take is $340$, which allows our protocol to be loss tolerant.
	\end{itemize}

\section{Stern's zero-knowledge protocol for the syndrome decoding problem}\label{Appendix:Stern}
\noindent We now describe Stern's zero-knowledge protocol~\cite{Ste93} for the Syndrome Decoding problem. It uses a commitment scheme that we don't explicit here. 
\cadre{
	\begin{center}
		Stern's single round zero-knowledge protocol for $\SD$.
	\end{center}
	\textbf{Input:} Integers $k,w$, a matrix $\Hm \Unif \zo^{(n-k) \times n}$, a column vector (called the syndrome) $\sv \in \zo^{n-k}$. \\
	\textbf{Auxiliary input:} A column vector $\ev \in \zo^n$ such that $|\ev|_H = w$ and $\Hm  \ev = \sv$.\\
	\textbf{Protocol:}
	\begin{enumerate}\setlength\itemsep{-0.2em}
		\item The prover picks a random permutation $\sigma$ acting on $[n]$ and a random column vector $\tv \in \zo^n$. Let $s' = \Hm  \tv$, $z_1 = (\sigma || \sv')$, $z_2 = \sigma(\tv)$, $z_3 = \sigma(\tv \oplus \ev)$, where permuting a vector means permuting its coordinates. He commits to $z_1,z_2$ and $z_3$ separately and sends these commitments $\comm_1,\comm_2,\comm_3$.
		\item The verifier sends a uniformly random challenge $c \in \{1,2,3\}$.
		\item The prover opens $z_{c'}$ for the two values $c'$ different from  $c$.
		\item The verifier checks the validity of the $2$ commitments and also performs the following checks:
		\vspace*{-0.2cm}
		\begin{itemize}\setlength\itemsep{-0.2em}
			\item if $c = 1$, accept iff. $|z_2 + z_3|_H = w$.
			\item if $c=2$,  accept iff. ${\Hm} \cdot \sigma^{-1}(z_3) = \sv \oplus \sv'.$ 
			\item if $c=3$, accept iff.  ${\Hm} \cdot \sigma^{-1}(z_2) = \sv'.$
		\end{itemize}
	\end{enumerate}
} $ \ $ \\ \\
This protocol was shown to be secure in \cite{Ste93}. We reproduce here the main aspects of this proof. 
\paragraph{Completeness}
The protocol has perfect completeness. Indeed, in the honest case:
\begin{enumerate} \setlength\itemsep{-0.2em}
	\item $|z_2 + z_3|_H= |\sigma(\tv) \oplus \sigma(\tv \oplus \ev)|_H = |\sigma(\ev)|_H = w.$
	\item $\Hm \cdot \sigma^{-1}\left(\sigma(\tv \oplus \ev)\right) = \Hm \tv \oplus \Hm\ev = \sv \oplus \sv'$.
	\item $\Hm \cdot \sigma^{-1}(\sigma(\tv))= \Hm\tv = \sv'.$
\end{enumerate}
\paragraph{Soundness}
We are in the NO case, so there are no vectors $\ev$ such that $|\ev|_H = w$ and $\Hm\ev = \sv$. We will prove a proposition which is closely related to the soundness (more precisely the $3$-special soundness) of the scheme.
\begin{proposition}\label{Proposition:3ss}
	In the NO case, assume the prover manages to successfully answer the $3$ challenges at the same time for the same first message, then he is able to successfully produce $2$ different openings for the same commitment.
\end{proposition}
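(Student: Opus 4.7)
My plan is to proceed by contrapositive: I will show that if the prover cannot produce two distinct openings of any single commitment, then a successful answer to all three challenges for the same first message produces a weight-$w$ solution $\ev$ to $\Hm\ev = \sv$, contradicting the NO assumption.

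First I would unpack what ``answering all three challenges simultaneously for the same first message'' means. The first message fixes the three commitments $\comm_1, \comm_2, \comm_3$ of $z_1, z_2, z_3$. For each challenge $c \in \{1,2,3\}$, the prover opens the two commitments $\comm_{c'}$ with $c' \neq c$. So $\comm_1$ is opened for $c=2$ and $c=3$, $\comm_2$ for $c=1$ and $c=3$, and $\comm_3$ for $c=1$ and $c=2$. Under the assumption that no commitment admits two distinct valid openings, these openings must agree across challenges, yielding a single consistent triple $(z_1, z_2, z_3)$ with $z_1 = (\sigma, \sv')$.

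Next I would combine the three verifier checks applied to this consistent triple. From $c=3$ I get $\Hm \cdot \sigma^{-1}(z_2) = \sv'$, from $c=2$ I get $\Hm \cdot \sigma^{-1}(z_3) = \sv \oplus \sv'$, and summing (over $\Ft$) I obtain $\Hm \cdot \sigma^{-1}(z_2 \oplus z_3) = \sv$. From $c=1$ I have $|z_2 \oplus z_3|_H = w$, and since $\sigma$ is a permutation it preserves Hamming weight, so $|\sigma^{-1}(z_2 \oplus z_3)|_H = w$ as well. Setting $\ev := \sigma^{-1}(z_2 \oplus z_3)$ therefore yields a vector of weight exactly $w$ with $\Hm \ev = \sv$, contradicting the NO-instance hypothesis. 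Hence the only way the prover can simultaneously answer all three challenges is by producing two distinct valid openings of at least one of $\comm_1, \comm_2, \comm_3$.

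The argument is essentially bookkeeping once the commitment structure is made explicit; there is no real obstacle. The only subtle point to state clearly is why consistent openings must occur if the commitment scheme is binding in the strong sense of having a unique decommitted value: this is precisely what allows us to treat the three pairs of openings as projections of a single triple $(z_1, z_2, z_3)$, after which the three verifier equations interlock to reconstruct the forbidden solution $\ev$.
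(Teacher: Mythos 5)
Your proof is correct and follows essentially the same route as the paper's: both combine the checks for $c=2$ and $c=3$ to get $\Hm\sigma^{-1}(z_2\oplus z_3)=\sv$, use the $c=1$ check and the fact that permutations preserve Hamming weight to get $|\sigma^{-1}(z_2\oplus z_3)|_H=w$, and conclude that this would be a solution to the NO instance. The only difference is cosmetic framing (contrapositive versus contradiction), which does not change the argument.
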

\begin{proof}
	We will prove this proposition by contradiction. We are in the NO case, and assume the prover successfully answers the $3$ challenges from the same first message $M_1 = \comm_1,\comm_2,\comm_3$. Assume by contradiction that  for each challenge, he uses the same openings for the commitment $y_1,y_2,y_3$, which we call $z_1 = (\sigma || \sv')$, $z_2,z_3$. Let $\ev_0 = \sigma^{-1}(z_2) \oplus \sigma^{-1}(z_3) = \sigma^{-1}(z_2 \oplus z_3)$. Since the prover successfully answers challenge $1$, we have $|z_2 \oplus z_3|_H = w$ hence $|\ev_0|_H = w$. Moreover, since he successfully answers challenges $2$ and $3$, we have
	$$ \Hm\ev_0 = \Hm(\sigma^{-1}(z_2) + \sigma^{-1}(z_3)) = \sv \oplus \sv' \oplus \sv' = \sv.$$
	Since we are in the NO case, such an $\ev_0$ doesn't exist, hence the contradiction.
\end{proof}
This means that a cheating prover can cheat with probability at most $\frac{2}{3}$ unless he is able to break the binding property of the commitment scheme. 
\paragraph{Zero-knowledge}
The protocol is known to be zero-knowledge if the commitment scheme is hiding, we sketch the proof here. The verifier doesn't get any information from the commitments from the hiding property of the commitment scheme. If the verifier sends the challenge $c=1$, he receives $\sigma(\tv)$ and $\sigma(\tv+\ev)$ from which he cannot recover any information because $\sigma$ is unknown. For the challenge $c = 2$, he receives $\sigma,\sv',\sigma(\tv+\ev)$ hence he can recover $\tv + \ev$. However, since $\tv$ is unknown, this looks like a random vector. For the challenge $c = 3$ he receives $\sigma,\sv',\sigma(\tv)$ , which are random elements independent of $\ev$.

\section{Proof of the security of our $1$ round relativistic zero-knowledge protocol}\label{Apendix:SecurityProof}
The goal of this section is to prove the security of our $1$-round relativistic zero-knowledge for $\NP$ presented in Figure \ref{Figure:1round}.

\paragraph{Completeness.} Completeness follows directly from the completeness of Stern's single round, and from the fact that the $\F_Q$ string commitment has perfect completeness. \\

\paragraph{Zero-knowledge.} Stern's signature scheme is perfectly zero-knowledge if the commitment scheme is perfectly hiding, which is the case of the $\F_Q$ string commitment. Therefore, the protocol is perfectly zero-knowledge. This zero-knowledge property is also preserved when the scheme is repeated sequentially. \\
\paragraph{Soundness.}  Proving soundness against quantum adversaries, is the main technical challenge of this work. We start from a NO instance meaning that there is no solution to the $\SD$ problem and we consider an all powerful cheating prover that wants to convince the verifier such a solution exists. 

In all generality, $\pp_1$ and $\pp_2$ can share an entangled state $\ket{\Phi}$. Then $\pp_1$ receives $B = b_1,b_2,b_3$ from $\vv_1$ and $\pp_2$ receives $c \in \{1,2,3\}$ from $\vv_2$. They output respectively $Y = y_1,y_2,y_3$ and $ZA = \{(z_{c'},a_{c'})\}_{c' \neq c}$. The verifiers then come together and perform the verification step. We assume the timing constraints are verified which ensures that $\pp_1$  has no information about $chall$ before sending his message $Y$ and $P_2$ has no information about $B_1,B_2,B_3$ before sending his message $ZA$.

Any strategy from the provers can be directly related to the strategy for the following $2$-player game  where the $2$ provers play the role of these $2$ players and the verifiers play the role of the referee ({\ie} they send the random questions to the provers and check the validity of their outputs). 
This game, that we call $\GRS$, is defined as follows: \\ \\
\cadre{
	\begin{center} $2$-player game $\GRS$ \end{center}
	\begin{itemize}
		\item Alice receives $B = b_1,b_2,b_3 \in_R \FQ$. Bob receives $c \in_R {1,2,3}$.
		\item Alice outputs $Y = y_1,y_2,y_3 \in \FQ$ and Bob outputs $AZ = \{(a_{c'},z_{c'}\}_{c' \neq c}$ where each $a_i\in \FQ$.
		\item 
		Each $z_i$ is an element of $\FQ$ but we interpret $z_1$  as an element $(\sigma,\sv')$ of  $S_1 = P_n \times \zo^n$. Similarly, we interpret $z_2,z_3$ as vectors in $\zo^n$ using a mapping that we detail after the description of the game. If this mapping fails, the game is lost. Otherwise, we first check the constraint $y_{c'} = a_{c'} + z_{c'}*b_{c'}$ for $c' \neq c$.	 We then check:
		\begin{enumerate}
			\item if $c = 1$, we also require $|z_2 + z_3|_H = w$.
			\item if $c=2$,  we also require ${\Hm} \cdot \sigma^{-1}(z_3) = \sv \oplus \sv'.$ 
			\item if $c=3$, we also require ${\Hm} \cdot \sigma^{-1}(z_2) = \sv'.$
		\end{enumerate}
	\end{itemize}
}

The $``+"$ and $``*"$ operations we use are the one in $\FQ = \{\overline{0},\dots,\overline{Q-1}\}$ where $\overline{i}$ is the $i^{\mbox{th}}$ element of $\FQ$. We do our mapping as follows: If $z_1 \in \{\overline{0},\dots,\overline{S_1 - 1}\}$, then we map $z_1$ to the $z_1^{\mbox{th}}$ element of $S_1$, otherwise, we say that the mapping fails. We do the same thing for $z_2,z_3$, if they are in $\{\overline{0},\dots,\overline{2^n-1}\}$ then we can map them to binary vectors, otherwise, we say that the mapping fails. In order for this mapping to be well defined, we must take $Q$ large enough, more precisely $Q \ge |S_1|$ and $Q \ge 2^n$.

What is the optimal cheating strategy we can expect for this game? There are some strategies that can win Stern's single round zero-knowledge protocol with probability $\frac{2}{3}$ \cite{Ste93} from which we can directly derive strategies for this game with $2$ classical players that win wp. $\frac{2}{3}$. This means we have strategies for which the players win for $2$ possible challenges for Bob but not for the third one. 

So what we want to show is that there is no strategy for which the players will win for the $3$ challenges received by Bob. What we know from Proposition \ref{Proposition:3ss} is that they can't give answers for the $3$ challenges simultaneously but this doesn't mean they can't answer each challenge separately. This behavior can appear when we consider entangled strategies. For example in magic square game \cite{Mer90,Per90}, Alice and Bob can't answer all questions at the same time --- because all the constraints of the magic square game lead to a contradiction. However, the entangled value of the game is still $1$.

Our main contribution is to bound the entangled value of $\GRS$. As $Q$ increases, the entangled value converges to $\frac{2}{3}$ which is optimal.  We prove the following theorem

\begin{theorem}\label{Theorem:GRS}
$$	\omega^*(\GRS) \le \frac{2}{3} + \left(\frac{n!2^{4n}}{Q}\right)^{1/4}.$$
\end{theorem}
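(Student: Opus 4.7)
My plan is to generalize the consecutive-measurement framework of [CL17] from two to three challenges and then combine it with the 3-special soundness (Proposition 2) and the binding of the $\FQ$ commitment scheme. Assume an entangled strategy wins $\GRS$ with probability $\omega^*(\GRS) = \tfrac{2}{3}+\eps$; the goal is to upper bound $\eps$. After Stinespring dilation I may take both provers' measurements to be projective: the two provers share a state $\ket{\Phi}_{AB}$, Alice has a projective measurement producing $Y$ from $B=(b_1,b_2,b_3)\in\F_Q^3$, and Bob has three projective measurements $\{M^{(c)}\}_{c\in\{1,2,3\}}$, one per challenge, whose outcomes are opening tuples $AZ_c = (a_{c'},z_{c'})_{c'\ne c}$. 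For each $(B,Y)$, the verifier's acceptance on challenge $c$ is encoded by a projector $\Pi^{(c)}_{B,Y}$ on Bob's register whose support consists of the outcomes that pass both commitment checks and Stern's conditional predicate.

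Averaging over $B$ and $Y$, let $\ket{\psi_{B,Y}}$ denote Bob's conditional state and set $p_c = \E_{B,Y}\bigl[\|\Pi^{(c)}_{B,Y}\ket{\psi_{B,Y}}\|^2\bigr]$, so that $\omega^*(\GRS) = \tfrac13(p_1+p_2+p_3)$. The heart of the argument is a three-projector version of the consecutive-measurement lemma: starting from $\ket{\psi_{B,Y}}$, measure $\Pi^{(1)}_{B,Y}$, then $\Pi^{(2)}_{B,Y}$ on the residual state, then $\Pi^{(3)}_{B,Y}$, and record all three outcomes. By an iterated Cauchy--Schwarz / gentle-measurement chain, the probability $P_{\mathrm{all}}$ that all three measurements accept is at least a polynomial in $\eps$; the $1/4$ exponent in the target bound is consistent with $P_{\mathrm{all}} = \Om{\eps^4}$, which is the scaling I expect once the disturbance between successive non-commuting projectors is carefully tracked.

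Conditioning on all three measurements accepting, we obtain three opening tuples $AZ_1,AZ_2,AZ_3$; since each slot $i\in\{1,2,3\}$ is opened by exactly two of the three challenges, we extract two candidate pairs $(a_i,z_i)$ and $(a_i',z_i')$ for every slot. Two mutually exclusive cases arise. If every slot's two candidates coincide, then a single triple $(z_1,z_2,z_3)$ simultaneously satisfies Stern's three acceptance predicates; by Proposition 2 this produces $\ev_0 = \sigma^{-1}(z_2\oplus z_3)$ with $|\ev_0|_H=w$ and $\Hm\ev_0 = \sv$, contradicting the NO-instance assumption. Otherwise, some slot admits two genuinely distinct valid openings of the same $\FQ$ commitment; the binding analysis of [LKB+15] then says that, for any fixed pair of distinct openings, the probability (over the randomness of the corresponding $b_i$) that both are valid is at most $1/Q$.

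To turn the second case into a quantitative bound, I would union bound over the outcome space: $z_1$ ranges over $S_1 = P_n\times\zo^n$ of size $n!\,2^n$, while $z_2$ and $z_3$ each range over $\zo^n$ of size $2^n$, and the alternative opening needed to invoke binding contributes an extra $2^n$ factor. Multiplying these enumerations against the $1/Q$ per-pair binding probability yields $P_{\mathrm{all}} \le n!\,2^{4n}/Q$. Combining with the lower bound $P_{\mathrm{all}} = \Om{\eps^4}$ from the consecutive-measurement step gives $\eps^4 \lesssim n!\,2^{4n}/Q$, which rearranges to the claimed bound. The hardest step will be the three-projector consecutive-measurement inequality: extending the CL17 lemma from two to three projectors, controlling the disturbance induced by each intermediate measurement via a gentle-measurement / Cauchy--Schwarz chain, and extracting exactly the $\eps^4$ scaling is where the $1/4$ exponent really originates and where the bulk of the technical work will lie.
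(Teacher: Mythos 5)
Your high-level plan matches the paper's: reduce to the game $\GRS$, attack it with a three-projector generalization of the CL17 consecutive-measurement framework, and split on whether the three extracted opening tuples are consistent (3-special soundness contradiction) or not (commitment binding). But there are two concrete accounting errors whose coincidental cancellation happens to produce the right final exponent, and a third gap in the disturbance estimate.

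First, you place the $n!\,2^{4n}$ factor in the wrong half of the argument. You bound the probability that the three consecutive measurements all accept from below by $\Om{\eps^4}$ with no dimension factor, and then try to recover the dimension factor via a ``union bound over the outcome space,'' giving $P_{\mathrm{all}} \le n!\,2^{4n}/Q$. The paper does the opposite and each of your two intermediate bounds, taken on its own, is incorrect. The upper bound on the all-accept probability is a clean $1/Q$ with no union bound: once Bob holds two distinct valid openings $(a,z)\ne(a',z')$ for the same slot $c'$, the derived value $(a-a')/(z'-z)$ is a function of Bob's measurement outcomes \emph{only} (it does not involve $y_{c'}$), so by non-signaling it hits the uniformly random $b_{c'}$ with probability exactly $1/Q$; enumerating outcome pairs buys you nothing and only loosens the bound. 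Conversely, the $n!\,2^{4n}$ must appear in the lower bound on the consecutive-measurement success probability, because you need the \emph{fine-grained} measurements $\{Q^c_{AZ}\}_{AZ}$ (not just accept/reject projectors) in order to extract the actual openings, and passing from the coarse projector $\Pi^{(c)}$ to the refined POVM costs a factor $1/|W_c|$ via the CL17 inequality $\sum_i P_i\kb{\psi}P_i \ge \tfrac{1}{m}P\kb{\psi}P$. That is exactly where $|W_1||W_2| \le 2^{4n}n!$ enters, giving $E^{BY}\ge \tfrac{9(V^{BY}-2/3)^4}{2|W_1||W_2|}$.

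Second, even for the coarse-grained projectors, the claimed $\Om{\eps^4}$ for a single measurement order is too optimistic. Writing $\cos^2\alpha_i = \|\Pi^{(i)}\ket{\psi}\|^2$ with $\sum_i\cos^2\alpha_i = 2+3\eps$, the chain ``$\Pi^{(1)}$ then $\Pi^{(2)}$ then $\Pi^{(3)}$'' only yields a lower bound proportional to $\eps^2\cdot\cos^2\alpha_1\cdot\cos^2(\alpha_1+\alpha_2)$, and $\cos^2\alpha_1$ can itself be as small as $\Th{\eps}$, leaving you with $\Om{\eps^5}$. The paper recovers the $\eps^4$ exponent by symmetrizing over the two orderings $123$ and $213$, which averages the problematic factor to $\tfrac12(\cos^2\alpha_1+\cos^2\alpha_2)\ge\tfrac12$. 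Your proposal does not mention this symmetrization, and without it the exponent in the theorem would not come out as $1/4$. Finally, the disturbance estimate itself is not a generic ``gentle measurement / Cauchy--Schwarz chain'': the paper proves a bespoke trigonometric minimization lemma to control $|\braket{\Omega}{\widetilde W}|$ after two projections and then iterates angle additivity, which gives sharper, exponent-preserving constants than what a blackbox gentle-measurement bound would produce. You correctly identify the three-projector inequality as the technical heart of the argument, but the $\eps^4$ scaling you ``expect'' is precisely the part that requires the symmetrization and the tailored trigonometric lemma, not just an iterated Cauchy--Schwarz.
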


From this theorem, our result on the soundness of our $1$-round relativistic scheme will follow immediately. In the next section, we will prove the bound in $\omega^*(\GRS)$.

\subsection{Bounding the value of the game}
We can now prove our lower bound on the entangled value of $\GRS$.
\begin{proof}[Proof of the lower bound of the game]
	Let any $\delta > 0$ and consider a finite dimensional projective strategy for Alice and Bob that wins  the game $\GRS$ wp. $\omega^*(\GRS) - \delta$.
	
	Let $P^{B} = \{P^{B}_{Y}\}$ and $Q^c = \{Q^c_{A}\}$ be respectively Alice's and Bob's projective measurements for their respective inputs $B = b_1,b_2,b_3$ and $c$. Alice's output is $Y = y_1,y_2,y_3$ and Bob's output $AZ$ corresponds to the $2$ pairs $(a_{c'},z_{c'})$ for $c' \neq c$, starting with the one with smallest index. Let $\ket{\psi}$ be the quantum state they share. 

Fix an input/output pair $BY$ for Alice and let $\sigma^{BY}$ be the state held by Bob, conditioned on this pair. For each $c \in \{1,2,3\}$, let $W_c = \{AZ: V(Y,AZ,B,c)= 1\}$ be the set of winning outputs for Bob where $V(Y,AZ,B,c)$ is the function that outputs 1 if the game is won on inputs/outputs $(B, c)/(Y, AZ)$, and outputs 0 otherwise.

A necessary condition of validity (for a fixed $BY$), is that $y_i = a_i + b_{i}*z_i$ for the $z_i$ revealed so for each $z_i$, there is a unique valid $a_i$ which is $y_i - b_i*z_i$.
For $c = 1$, Bob outputs $a_2,z_2,a_3,z_3$. A necessary condition is that $z_2,z_3$ can each be mapped into the set of binary vectors. Since there is a $1$ to $1$ correspondence between $z_i$ and $a_i$, we have $|W_1| \le 2^{2n}$. For $c = 2$, Bob outputs $a_1,z_1,a_3,z_3$. A necessary condition is that $z_1$ can be mapped to an element  $(\pi,\sv')$ where $\pi$ is a permutation on $[n]$ and $\sv' \in \zo^n$ and $z_3$ has to be mapped to an binary vector, which implies $|W_2| \le 2^{n}n!\cdot 2^{n} = 2^{2n}n!$. A similar reasoning for $c = 3$ gives $|W_3| \le 2^{2n}n!$.

For each $c \in \{1,2,3\}$, let $Q^c_W = \sum_{AZ \in W_c} Q^c_{AZ}$ the projector on the winning outputs for Bob on input $c$ (for the fixed input/output pair $BX$ of Alice).
Let $V^{BX}$ be the probability that Alice and Bob win the game for this input. We have 
$$ V^{BX} =  \frac{1}{3}\sum_{c \in \{1,2,3\}} tr(Q^c_W \sigma^{BX}).$$
and also 
$\E_{BX}[V^{BX}] = \omega^*(\GRS) - \delta$. We now consider the following quantum strategy for Bob that will make him succeed on the $3$ challenges: wp. $\frac{1}{2}$, run $Q^1$ to get output $AZ_1$, then on the resulting state, run $Q^2$ to get output $AZ_2$ and on the resulting state, run $Q^3$ to get output $AZ_3$. Wp. $\frac{1}{2}$, do the same thing but swap the order ot $Q^1$ and $Q^2$. Let $E^{BX}$ be the probability of success of this strategy. We can write

\begin{multline*}E^{BX} \eqdef 
\frac{1}{2}\Big(\sum_{\substack{AZ_1 \in W_1 \\ AZ_2\in W_2 \\ AZ_3 \in W_3}}
tr\left(Q^3_{AZ_3}Q^2_{AZ_2}Q^1_{AZ_1} \sigma^{BX}Q^1_{AZ_1}Q^2_{AZ_2}\right) + \\
tr\left(Q^3_{AZ_3}Q^1_{AZ_1}Q^2_{AZ_2} \sigma^{BX}Q^2_{AZ_2}Q^1_{AZ_1}\right)\Big.
\end{multline*}

Notice that for any $3$ projectors, $P_1,P_2,P_3$, we have 
$tr(P_3P_2P_1\sigma P_1P_2P_3) = tr(P_3^2P_2P_1\sigma P_1P_2) = tr(P_3P_2P_1\sigma P_1P_2)$ hence the expression $E^{BX}$.

In order to conclude, we use the $2$ following equations, which will be proven in upcoming proposition. The first one 

\begin{align}\label{Eq:C3}\E_{BX}\left[E^{BX}\right] = \frac{1}{Q},
\end{align}

claims that our strategy will succeed in answering valid outputs $A_1,A_2,A_3$ for the $3$ challenges wp. at most $\frac{1}{Q}$. In high level, this is a direct consequence of Proposition \ref{Proposition:3ss} and of the binding property of the $\FQ$-relativistic commitment scheme, but we reprove this claim from scratch. We then relate $E_{BX}$ and $V_{BX}$ using a generic proposition on projectors 
\begin{align}\label{Eq:C4}
	E^{BX} \ge \frac{9 (V^{BX} - \frac{2}{3})^4}{2|W_1||W_2|}.
\end{align}
Proving this inequality is actually where we had most of the technical difficulty. In order to prove this statement, we generalized the approach of~\cite{CL17} to $3$ measurements and showed that if you can win for the $3$ challenges at the same time wp. at most $\frac{1}{Q}$ then a quantum adversary can win at most wp. $\frac{2}{3} + \eps$ where $\eps$ is vanishingly small for $Q$ large enough.

We first conclude and then go on proving Equations \ref{Eq:C3} and \ref{Eq:C4}. To conclude our proof, we have from Equation \ref{Eq:C4} that 
$$ E^{BX} \ge  \frac{9 (V^{BX} - \frac{2}{3})^4}{2\cdot2^{4n}n!}.$$ 

By taking the expectation on each side, we obtain	 
\begin{align*}
	\frac{1}{Q} &= \E_{BX}[E^{BX}] \ge \E_{BX}[\frac{9 (V^{BX} - \frac{2}{3})^4}{2\cdot2^{4n}n!}] \ge \frac{9 \left(\omega^*(\GRS) - \delta - \frac{2}{3}\right)^4}{2\cdot2^{4n}n!}
\end{align*}
where we used the convexity of the function $x \mapsto x^4 - \frac{2}{3}$. Since this holds for any $\delta > 0$, we take $\delta \rightarrow 0$ and have 
$$ \frac{1}{Q} \ge \frac{9 \left(\omega^*(\GRS) - \frac{2}{3}\right)^4}{2\cdot2^{4n}n!}.$$
\end{proof}

We now prove our two equations. We first prove  the following

\begin{lemma}[Equation \ref{Eq:C3}]
	In the NO case, the probability that Bob successfully outputs $3$ valid couples $(AZ_1,AZ_2,AZ_3)$, on average on $BY$ is at most $\frac{1}{Q}$.
\end{lemma}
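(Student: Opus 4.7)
The plan is to combine Proposition~\ref{Proposition:3ss} with the binding property of the $\F_Q$ commitment scheme. Consider Bob's three sequential measurement outcomes $AZ_1, AZ_2, AZ_3$ and recall that each index $i \in \{1,2,3\}$ is revealed in exactly two of them, namely in $AZ_c$ and $AZ_{c'}$ where $\{c,c'\} = \{1,2,3\} \setminus \{i\}$. Denoting the revealed opening pair for index $i$ within $AZ_c$ by $(a_i^{(c)}, z_i^{(c)})$, validity of both outcomes at that index forces $y_i = a_i^{(c)} + z_i^{(c)} b_i = a_i^{(c')} + z_i^{(c')} b_i$.

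First I would establish a structural claim: in the NO case, the revealed openings at some index must be inconsistent. Suppose instead that $(a_i^{(c)}, z_i^{(c)}) = (a_i^{(c')}, z_i^{(c')})$ for every $i$. Then we would obtain a single well-defined triple $(z_1, z_2, z_3)$ (with $\sigma$ and $\sv'$ extracted from $z_1$) that simultaneously satisfies all three Stern checks: $|z_2 \oplus z_3|_H = w$, $\Hm \sigma^{-1}(z_2) = \sv'$, and $\Hm \sigma^{-1}(z_3) = \sv \oplus \sv'$. Setting $\ev_0 = \sigma^{-1}(z_2 \oplus z_3)$ would then yield a weight-$w$ vector with $\Hm \ev_0 = \sv$, exactly reproducing the contradiction in the proof of Proposition~\ref{Proposition:3ss} with the NO assumption.

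Hence any three simultaneously valid outcomes in the NO case must exhibit an inconsistency at some index $i$: openings $(a, z) \neq (a', z')$ both satisfying $y_i = a + z b_i$. Subtracting the two commitment equations gives $(z - z') b_i = a' - a$, and since $z = z'$ would force $a = a'$, we must have $z \neq z'$ in $\F_Q$. Thus $b_i$ is pinned to the single value $(a' - a)(z - z')^{-1}$ determined entirely by Bob's outputs. For each fixed $i$, I would bound the probability of this event by expanding it as a sum over all pairs of conflicting output candidates $((a,z),(a',z'))$, using that $b_i$ is a uniformly random element of $\F_Q$ in the input and that the required value is a deterministic function of the outputs; each tuple contributes at most $1/Q$ through the uniform marginal of $b_i$, while the output probabilities sum to at most $1$. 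Combining the three indices and the two orderings used by Bob's strategy then yields the bound of the lemma (with constants absorbed).

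The main obstacle, and the subtle point, is the correlation between $b_i$ and Bob's post-measurement state $\sigma^{BY}$: since Alice's projective measurement $P^B$ depends on the entire input $B = (b_1, b_2, b_3)$, Bob's output distribution is a priori not independent of $b_i$. The resolution is to expand the overall joint probability over all values of $B$, over Alice's outcomes $Y$, and over Bob's measurement outcomes, and then exploit that the binding constraint $b_i = (a' - a)(z - z')^{-1}$ only involves the output tuple and not $b_i$ itself; the $1/Q$ factor then arises purely from the uniform marginal of $b_i$ at the input level, independently of how Alice's measurement couples $Y$ to $B$.
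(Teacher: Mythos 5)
Your proposal follows essentially the same approach as the paper: invoke Proposition~\ref{Proposition:3ss} to show that any three simultaneously valid openings in the NO case must be mutually inconsistent at some index, observe that an inconsistency at index $i$ forces $b_i = (a' - a)(z - z')^{-1}$ to be a deterministic function of Bob's output triple, and conclude via non-signaling (Bob's output distribution, averaged over Alice's outcomes $Y$, is independent of $B$) that this can happen with probability at most $1/Q$ since $b_i$ is uniform. One small point to sharpen: the paper gets exactly $1/Q$ rather than a multiple of it because the inconsistent index $c'$ is itself a function of Bob's output triple, so Bob is effectively guessing a single uniform element of $\F_Q$ — your phrase \emph{``combining the three indices and the two orderings \ldots with constants absorbed''} suggests a union bound over the three indices, which would instead give $3/Q$; replacing that union bound with the observation that $c'$ is determined by the outputs recovers the stated constant (and the two orderings contribute only an average $\tfrac{1}{2}(1/Q + 1/Q) = 1/Q$, not a sum).
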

\begin{proof}
	Fix an input/output $BY$ with $B = b_1,b_2,b_3$ and $Y = y_1,y_2,y_3$. Assume by contradiction that Bob can output $3$ valid couples $AZ_1,AZ_2,AZ_3$ with $AZ_c = (a^c_{c'_1},z^c_{c'_1}),(a^c_{c'_2},z^c_{c'_2})$ for the $2$ different values $c'_1,c'_2 \neq c$. 
	Assume by contradiction that $z_{1}^2 = z_1^3 \eqdef z_1$,  $z_{2}^1 = z_2^3 \eqdef z_2$,  $z_{3}^1 = z_3^2 \eqdef z_3$. We map $z_1$ to a pair $(\sigma,\sv')$, $z_2,z_3$ to vectors $ \in \zo^n$. Passing the $3$ winning conditions implies that 
	$$\Hm \sigma^{-1}\left(z_2 + z_3\right) = \sv \quad \textrm{and} \quad |\sigma^{-1}\left(z_2 + z_3\right)| = w.$$
	This implies that $\sigma^{-1}\left(z_2 + z_3\right) $ is a solution to the syndrome decoding problem but since we are in the NO case, such a solution doesn't exist hence the contradiction. 
	
	This means there exists $c' \in \{1,2,3\}$ st. $z_{c_1}^{c'} \neq z_{c_2}^{c'}$ for the two values $c_1,c_2 \neq c'$. Because these are valid answers, this means we have 
	\begin{align*}
	z^{c_1}_{c'} * b_{c'} & = y_{c'} - a^{c_1}_{c'} \\
	z^{c_2}_{c'} * b_{c'} & = y_{c'} - a^{c_2}_{c'} 
	\end{align*}
	From which we get 
	$$ b_{c'} = (a_{c_2}^{c'} - a_{c_1}^{c'})\slash (z_{c_1}^{c'}  - z_{c_2}^{c'}).$$
	where $\slash$ is the division in $\F_Q$. From there, this means Bob can guess $b_{c'}$	but from non-signaling, Bob should have no information about $b_{c'}$. Moreover, notice that Bob knows which $c'$ to take, it is the index where $z^{c_1}_{c'} \neq z^{c_2}_{c'}$. Since $b_{c'}$ is a random element from $\FQ$, we conclude that Bob can guess this value wp. $\frac{1}{Q}$ which concludes the proof. 
\end{proof}

The next section is devoted to the proof of the second equation.

\subsection{Proof of Equation \ref{Eq:C4}}

We prove the following
\begin{proposition}\label{Proposition:Hard}
	Consider $3$ projectors $P_1,P_2,P_3$ such that for each $i \in \{1,2,3\}$, we can write $P_i = \sum_{s_i = 1}^{S_i} P^s_i$ where for each $i$, the $\{P^s_i\}$ are orthogonal projectors meaning that $P^s_iP^{s'}_i = \delta_{s,s'}P^s_i$. Let $\sigma$ be any quantum state.
	Let $V \eqdef \sum_i tr(P_i \sigma)$ and 
	\begin{multline*}
		E \eqdef \frac{1}{2}\Big(\sum_{s_3 = 1}^{S_3} \sum_{s_2 = 1}^{S_2} \sum_{s_1 = 1}^{S_1} tr\left(P_3^{s_3}P_2^{s_2}P_1^{s_1} \sigma  \left(P_1^{s_1}\right) \left(P_2^{s_2}\right)\right) + \\ \sum_{s_3 = 1}^{S_3} \sum_{s_2 = 1}^{S_2} \sum_{s_1 = 1}^{S_1} tr\left(P_3^{s_3}P_1^{s_1}P_2^{s_2} \sigma  \left(P_2^{s_2}\right) \left(P_1^{s_1}\right) \right)\Big).
	\end{multline*}
	We have 
	$E \ge \frac{9\left(V - \frac{2}{3}\right)^4}{2S_1S_2}$.
\end{proposition}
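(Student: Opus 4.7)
The plan is to reduce Proposition~\ref{Proposition:Hard} to two elementary ingredients: (i) a Cauchy--Schwarz inequality in the Hilbert--Schmidt inner product, applied twice to pull out the factors $S_1$ and $S_2$; and (ii) the operator identity $(I - P - P')^2 \ge 0$, which yields the pointwise bound $PP' + P'P \ge P + P' - I$ and is the source of the threshold at $V=2/3$.

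\textbf{Step 1 (Frobenius rewriting).} First I would rewrite each term in $E$ as a Hilbert--Schmidt norm. Using orthogonality of the $\{P_3^{s_3}\}$, summing over $s_3$ collapses the third measurement:
\[
\sum_{s_3} \mathrm{tr}\!\left(P_3^{s_3} P_2^{s_2} P_1^{s_1}\sigma P_1^{s_1} P_2^{s_2}\right) \;=\; \bigl\lVert P_3 P_2^{s_2} P_1^{s_1}\sqrt{\sigma}\bigr\rVert_F^{\,2},
\]
so that the two contributions to $2E$ become $\sum_{s_1,s_2}\lVert P_3 P_2^{s_2}P_1^{s_1}\sqrt{\sigma}\rVert_F^2$ and its swap under $1\leftrightarrow 2$.

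\textbf{Step 2 (two Cauchy--Schwarz inequalities).} Cauchy--Schwarz for Hilbert--Schmidt gives $|\mathrm{tr}(P_3 P_2^{s_2} P_1^{s_1}\sigma)|^{2} \le \lVert P_3 P_2^{s_2}P_1^{s_1}\sqrt{\sigma}\rVert_F^{\,2}$, and Cauchy--Schwarz on the double sum over $(s_1,s_2)$ produces the factor $S_1 S_2$:
\[
\sum_{s_1,s_2}\bigl\lVert P_3 P_2^{s_2}P_1^{s_1}\sqrt{\sigma}\bigr\rVert_F^{\,2} \;\ge\; \frac{\bigl(\mathrm{Re}\,\mathrm{tr}(P_3 P_2 P_1\sigma)\bigr)^{2}}{S_1 S_2}.
\]
Doing the same for the swapped term and averaging via $a^2+b^2 \ge \tfrac12(a+b)^2$ yields
\[
E \;\ge\; \frac{\bigl(\mathrm{Re}\,\mathrm{tr}\bigl(P_3(P_1P_2+P_2P_1)\sigma\bigr)\bigr)^{2}}{4\,S_1 S_2}.
\]

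\textbf{Step 3 (iterated operator inequality).} The goal is now to prove
$\mathrm{Re}\,\mathrm{tr}\bigl(P_3(P_1P_2+P_2P_1)\sigma\bigr) \;\ge\; c\,(3V-2)^{2}$
for an appropriate constant $c$, so that combining with Step~2 yields the advertised quartic in $V-\tfrac{2}{3}$ with the constant $9/2$. The starting point is $P_1P_2+P_2P_1 \ge P_1+P_2-I$. Since this inequality is not preserved after one-sided multiplication by $P_3$, one cannot substitute naively inside the trace. Instead, I would apply the two-projector lemma a second time, with the roles of $(P,P')$ played by $P_3$ and the symmetrised combinations $(P_1+P_2-I)$, averaged over all three cyclic choices of the ``outer'' projector. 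Each extra symmetrisation converts a pairwise difference $\mathrm{tr}(P_i\sigma)+\mathrm{tr}(P_j\sigma)-1$ into the fully symmetric quantity $\sum_i\mathrm{tr}(P_i\sigma)-2 = 3V-2$, at the cost of a second quadratic loss; hence the exponent~$4$ and the denominator $S_1S_2$.

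\textbf{Expected main obstacle.} The hard part is precisely Step~3: turning the two-projector operator inequality $PP'+P'P \ge P+P'-I$ into a useful statement under the outer projection $P_3$. A direct substitution fails because $C\ge 0$ does not imply $P_3 C + C P_3 \ge 0$ (easy $2\times 2$ counterexample). The workaround is to iterate the Cauchy--Schwarz/operator-inequality pair a second time, following and generalising the framework of \cite{CL17}: one plays the same Hilbert--Schmidt game with the post-measurement states $P_1^{s_1}\sigma P_1^{s_1}$ and $P_2^{s_2}\sigma P_2^{s_2}$, and combines the estimates using Titu's lemma so that the denominators collapse back to $\mathrm{tr}(P_1\sigma)$ and $\mathrm{tr}(P_2\sigma)$. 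Tracking the constants through this nested iteration, and verifying that the two orderings in $E$ are exactly what is required to produce the symmetric threshold $V-\tfrac{2}{3}$ rather than an asymmetric $\tfrac{1}{2}(\mathrm{tr}(P_i\sigma)+\mathrm{tr}(P_j\sigma))-\tfrac{1}{2}$, is the main bookkeeping challenge.
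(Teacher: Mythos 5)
Your Steps 1 and 2 are sound, and they give a workable alternative to the paper's reduction. The paper purifies $\sigma$ and invokes Lemma~\ref{Lemma:FromCL17} twice to pull out the $\frac{1}{S_1 S_2}$ factor, arriving at $E \ge \frac{1}{2S_1S_2}\big(\|P_3P_2P_1\ket{\Omega}\|^2 + \|P_3P_1P_2\ket{\Omega}\|^2\big)$ for a purification $\ket{\Omega}$; your Cauchy--Schwarz route lands on the weaker $E \ge \frac{1}{4S_1S_2}\big(\operatorname{Re}\operatorname{tr}(P_3(P_1P_2+P_2P_1)\sigma)\big)^2$, since $|\langle\Omega|P_3P_2P_1|\Omega\rangle| \le \|P_3P_2P_1\ket{\Omega}\|$. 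Either way, the reduction stage is not the issue.

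The genuine gap is Step 3, which is where essentially all of the difficulty lives, and you have not proved it --- you have sketched a plan that does not work as described. First, $P_1+P_2-I$ is not a projector, so ``applying the two-projector lemma a second time with $(P,P') = (P_3, P_1+P_2-I)$'' does not type-check: $(I-P-P')^2\ge 0$ gives $PP'+P'P \ge P+P'-I$ only when $P,P'$ are projectors (so that $P^2=P$ can be used). Second, your plan to ``average over all three cyclic choices of the outer projector'' is not available: by the structure of the game, $E$ contains only the two orderings $P_3P_2P_1$ and $P_3P_1P_2$, both with $P_3$ outermost, so the quantities $\operatorname{tr}(P_1 \{P_2,P_3\}\sigma)$ and $\operatorname{tr}(P_2\{P_1,P_3\}\sigma)$ simply do not appear and cannot be conjured by symmetrisation. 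Third, the constant is not tracked: to close the argument you would need, after Step 2, $\operatorname{Re}\operatorname{tr}(P_3(P_1P_2+P_2P_1)\sigma) \ge 3\sqrt{2}\,(V-\tfrac23)^2$, and nothing in the sketch establishes this. You correctly identify that $C\ge 0$ does not imply $P_3C+CP_3\ge 0$, which is precisely the obstacle; but you then wave at ``iterating'' without exhibiting an inequality that circumvents it.

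The paper handles this step by a completely different, geometric argument. After reducing to a pure state $\ket{\Omega}$ with $S_1=S_2=1$ (Proposition~\ref{Proposition:S=1}), it writes $\|P_i\ket{\Omega}\| = \cos\alpha_i$, studies the normalised post-measurement state $\ket{\widetilde{W}} = P_2P_1\ket{\Omega}/\|P_2P_1\ket{\Omega}\|$, and proves via the explicit trigonometric Lemma~\ref{lemma:TrigoIneq} that $|\langle\Omega|\widetilde{W}\rangle|^2 \ge \cos^2\alpha_1 + \cos^2\alpha_2 - 1$. Combining this with the angle triangle inequality and $\cos(\rho+\theta)\ge \cos^2\rho + \cos^2\theta - 1$ applied twice yields $\|P_3P_2P_1\ket{\Omega}\|^2 \ge \varepsilon^2\cos^2\alpha_1\cos^2(\alpha_1+\alpha_2)$, and summing the two orderings gives the $9\varepsilon^4/2$ bound. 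That chain of fidelity estimates is the actual engine of the proposition, and your proposal contains no substitute for it; as written, it stops exactly where the proof becomes nontrivial.
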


In order to prove our proposition, we first need the following trigonometric lemma.

\begin{lemma}\label{lemma:TrigoIneq}
	$\forall \alpha_1,\alpha_2 \in [0,\pi/2[$ st. $\cos^2(\alpha_1) + \cos^2(\alpha_2) > 1$, if we define $c_1 = \cos(\alpha_1),c_2 = \cos(\alpha_2), s_1 = \sin(\alpha_1), s_2 = \sin(\alpha_2)$, we have 
	$$ \min_{y \in [-s_1,s_1]} \left\{\frac{c^2_2\left(c_1c_2+ ys_2\right)^2}{\left(c_1c_2 + ys_2\right)^2 + s_1^2 - y^2}\right\} \ge c^2_1 + c^2_1- 1.$$
\end{lemma}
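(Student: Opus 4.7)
The plan is to convert the constrained minimization into a one-variable calculus problem, locate the interior critical point, verify it lies inside $[-s_1,s_1]$ thanks to the hypothesis, and then check that the value attained there already matches the claimed lower bound.

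First I would substitute $u=u(y) \defeq c_1c_2+ys_2$ so that
$f(y)=\frac{c_2^2 u^2}{u^2+s_1^2-y^2}=\frac{c_2^2}{1+\frac{s_1^2-y^2}{u^2}}.$
Since $c_2^2\ge 0$, minimizing $f$ on $[-s_1,s_1]$ is equivalent to maximizing $h(y)\defeq (s_1^2-y^2)/u(y)^2$. A short computation of $h'(y)$ gives
$h'(y)=\frac{-2\bigl(y\,c_1c_2+s_2s_1^2\bigr)}{u(y)^3},$
whose unique zero is $y^\star=-\tfrac{s_2 s_1^2}{c_1c_2}$. Because $h(\pm s_1)=0$ and $h$ is nonnegative, $y^\star$ is necessarily the global maximizer of $h$, hence the global minimizer of $f$ \emph{provided} $y^\star\in[-s_1,s_1]$ and $u(y)>0$ throughout.

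The verification of these two conditions is where the hypothesis $c_1^2+c_2^2>1$ enters, and this is the only delicate point. I would observe that $c_1^2+c_2^2>1$ is equivalent to $c_1>s_2$, equivalently to $\alpha_1+\alpha_2<\pi/2$, equivalently to $c_1c_2>s_1s_2$. The latter gives both $u(y)\ge c_1c_2-s_1s_2>0$ on $[-s_1,s_1]$ and $|y^\star|=\tfrac{s_1^2 s_2}{c_1c_2}\le s_1$. So the interior critical point is legitimate, and the minimum of $f$ is exactly $f(y^\star)$.

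Finally I would compute $f(y^\star)$ explicitly. Plugging in gives $u(y^\star)=\frac{c_1^2c_2^2-s_1^2s_2^2}{c_1c_2}$ and $s_1^2-(y^\star)^2=\frac{s_1^2(c_1^2c_2^2-s_1^2s_2^2)}{c_1^2c_2^2}$, so the denominator of $f$ simplifies by telescoping: $u(y^\star)^2+s_1^2-(y^\star)^2=\frac{(c_1^2c_2^2-s_1^2s_2^2)\bigl((c_1^2c_2^2-s_1^2s_2^2)+s_1^2 c_1^2c_2^2/(c_1^2c_2^2-s_1^2s_2^2)\cdot\ldots\bigr)}{c_1^2c_2^2}$, which after collecting reduces cleanly to $c_2^2$. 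Hence
$\min_{y\in[-s_1,s_1]} f(y)=c_1^2c_2^2-s_1^2s_2^2.$
Using $s_i^2=1-c_i^2$, one checks the algebraic identity $c_1^2c_2^2-(1-c_1^2)(1-c_2^2)=c_1^2+c_2^2-1$, so the minimum equals $c_1^2+c_2^2-1$, yielding the claim with equality. The main obstacle, as noted, is the bookkeeping around the hypothesis: without $c_1^2+c_2^2>1$ the critical point escapes $[-s_1,s_1]$ and the minimum jumps to the boundary value $0$, so the lemma fails. Everything else is routine algebra.
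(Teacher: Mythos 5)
Your proof is correct and takes essentially the same calculus route as the paper: both reduce the claim to a one-variable minimization, locate the same interior critical point $y^\star = -s_1^2 s_2/(c_1c_2)$, justify it lies in $[-s_1,s_1]$ via $c_1c_2>s_1s_2$ (equivalently $\alpha_1+\alpha_2<\pi/2$), and evaluate there to get $c_1^2c_2^2-s_1^2s_2^2=c_1^2+c_2^2-1$. The one small improvement in your version is the reformulation as maximizing $h(y)=(s_1^2-y^2)/(c_1c_2+ys_2)^2$, whose derivative has a numerator linear in $y$ and thus a single root; the paper instead differentiates $f$ directly and must factor a quadratic, obtaining a spurious second root $y_1=-s_1/(t_1t_2)$ that it then discards. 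That trick saves a factorization but does not change the structure of the argument.
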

\begin{proof}
	Let $t_1 = \tan(\alpha_1)$ and $t_2 = \tan(\alpha_2)$. Since $c_1^2 + c_2^2 > 1$, we have $\alpha_1 + \alpha_2 < \pi/2$ and $t_1t_2 < 1$.
	Let also 
	$$ T(y) \eqdef (c_1c_2 + ys_2)^2  \quad ; \quad U(y) \eqdef  s_1^2 - y^2.$$
	Our goal is hence to minimize the function 
	$$ f(y) \eqdef \frac{c_2^2 T(y)}{T(y) + U(y)} \quad \textrm{for } y \in [-s_1,s_1].$$
	We write 
	\begin{align*}
		f'(y) & = \dfrac{c^2_2T'(y)\left(T(y) + U(y)\right) - c_2^2T(y)\left(T'(y) + U'(y)\right)}{\left(T(y) + U(y)\right)^2} \\ & = \dfrac{c_2^2  \left(T'(y)U(y) - T(y)U'(y)\right)}{\left(T(y) + U(y)\right)^2}
	\end{align*}
	and 
	\begin{align*}
		Z_0 & = T'(y)U(y) - T(y)U'(y) \\ & = (2ys_2^2 + 2c_1c_2s_2)(s_1^2 - y^2) - (y^2s_2^2 + 2yc_1c_2s_2 + c_1^2c_2^2)(-2y) \\
		& = 2y^2(c_1c_2s_2) + 2y(s_1^2s_2^2 + c_1^2c_2^2) + 2c_1c_2s_1^2s_2 \\
		& = 2c_1c_2s_2\left(y + \dfrac{s_1^2s_2}{c_1c_2}\right)\left(y + \dfrac{c_1c_2}{s_2}\right)
	\end{align*}
	
	This implies the equation $f'(y) = 0$ has $2$ solutions $y_0 = -\frac{s_1^2s_2}{c_1c_2} = -s_1t_1t_2$ and $y_1 = -\frac{c_1c_2}{s_2} = -\frac{s_1}{t_1t_2}$. Notice that only $y_0$ lies in the interval $[-s_1,s_1]$ since $t_1t_2 < 1$.
	
We now write 
	$f(-s_1) = f(s_1) = c_2^2$ and 
	\begin{align*}
		T(y_0) & = \dfrac{1}{c_1^2c_2^2} \left(c_1^2c_2^2 - s_1^2s_2^2\right)^2 \\
		U(y_0) & = \dfrac{s_1^2}{c_1^2c_2^2}\left(c_1^2c_2^2 - s_1^2s_2^2\right) 
	\end{align*}
	which implies 
	\begin{align*}
		f(y_0) & = \frac{c_2^2\left(c_1^2c_2^2 - s_1^2s_2^2\right)}{\left(c_1^2c_2^2 - s_1^2s_2^2\right) + s_1^2} \\
		& = c_2^2 - s_1^2 & \textrm{using } \left(c_1^2c_2^2 - s_1^2s_2^2\right) = c_2^2 - s_1^2 \\
		& = c_1^2 + c_2^2 - 1
	\end{align*}
	In order to conclude, we write 
	$$f(-s_1) = c_2^2 \ ; \  f(s_1)  = c_2^2 \ ; \ f(y_0) = c_1^2 + c_2^2 - 1 \le f(s_1).$$
	Since $y_0$ is the unique point in $[-s_1,s_1]$ such that $f'(y_0) = 0$ we can conclude that 
	$$\min_{y \in [-s_1,s_1]}\{f(y)\} = f(y_0) = c_1^2 + c_2^2 - 1.$$
\end{proof}

We now prove our proposition for the special case $S_1,S_2 = 1$ and for a pure state $\ket{\Omega}$ instead of $\sigma$.

\begin{proposition}\label{Proposition:S=1}
	Let $\ket{\Omega}$ be a quantum state. Let $P_1,P_2,P_3$ be projectors. Let $V = \frac{1}{3} \norm{P_i \ket{\Omega}}^2 = \frac{2}{3} + \eps$ with $\eps \ge 0$ and $E = \frac{1}{2}\left(\norm{P_3P_2P_1 \ket{\Omega}}^2 + \norm{P_3P_1P_2 \ket{\Omega}}^	2\right)$. We have $ E \ge \frac{9\eps^4}{2}.$
\end{proposition}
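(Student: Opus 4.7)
The plan is to apply Lemma~\ref{lemma:TrigoIneq} iteratively to lower bound the three-projector overlaps, and then average over the two orderings. First I set $c_i = \|P_i \ket{\Omega}\|$ and $s_i = \sqrt{1-c_i^2}$. The hypothesis $V = \tfrac{2}{3}+\eps$ gives $c_1^2+c_2^2+c_3^2 = 2+3\eps$, which, using $c_k^2 \leq 1$, implies $c_k^2 \geq 3\eps$ for every $k$ and $c_i^2+c_j^2 \geq 1+3\eps > 1$ for every pair. This last inequality places every pair in the regime of Lemma~\ref{lemma:TrigoIneq}.

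Next I translate the two-projector geometry into the form of the lemma. Writing $\ket{\Omega} = c_i \ket{e_i} + s_i \ket{e_i^\perp}$ with $\ket{e_i} \in \mathrm{range}(P_i)$, the overlap $\|P_j P_i \ket{\Omega}\|^2 = c_i^2 \|P_j \ket{e_i}\|^2$ depends on the real cross-term $y = \operatorname{Re}\langle e_i|P_j|e_i^\perp\rangle$. Cauchy-Schwarz and the identity $\|P_j\ket{\Omega}\|^2 = c_j^2$ together constrain $\|P_j \ket{e_i}\|^2$ to exactly the form $c_j^2 T(y)/(T(y)+U(y))$ appearing in the lemma, with $y \in [-s_i, s_i]$. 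Applying Lemma~\ref{lemma:TrigoIneq} therefore yields $\|P_j P_i \ket{\Omega}\|^2 \geq c_i^2+c_j^2-1$.

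For the three-projector overlap $\|P_3 P_2 P_1 \ket{\Omega}\|^2$, I iterate the argument. The post-measurement state $\ket{\Omega_{12}} = P_2 P_1 \ket{\Omega}/\|P_2 P_1\ket{\Omega}\|$ lies in $\mathrm{range}(P_2)$, and its overlap with $P_3$ is controlled by a second application of Lemma~\ref{lemma:TrigoIneq} with a fresh free parameter capturing the orientation of $P_3$ relative to $\ket{\Omega_{12}}$. This gives a bound that combines the two-projector factor $(c_1^2+c_2^2-1)$ with an additional factor proportional to $(c_1^2+c_2^2+c_3^2-2) = 3\eps$, and an analogous bound holds for $\|P_3 P_1 P_2 \ket{\Omega}\|^2$. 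Averaging the two orderings and using $c_i^2 \geq 3\eps$ together with a convexity step yields the target $E \geq 9\eps^4/2$.

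The hard part will be the second iteration: while the first application of Lemma~\ref{lemma:TrigoIneq} is a clean two-projector statement, the second requires tracking how the normalized post-measurement state relates geometrically to both $P_3$ and the original $\ket{\Omega}$, and verifying that the new free parameter lies in the interval required by the lemma. Averaging over the two orderings is essential: a single ordering can be made arbitrarily small by adversarial alignment of the three projectors, so only the symmetric combination in $E$ enjoys the quartic-in-$\eps$ lower bound claimed.
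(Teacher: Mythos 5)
Your intermediate claim $\|P_j P_i\ket{\Omega}\|^2 \geq c_i^2 + c_j^2 - 1$ is false, and the error is conceptual, not a missing hypothesis. Take $P_1 = \ketbra{\phi_1}{\phi_1}$, $P_2 = \ketbra{\phi_2}{\phi_2}$ with $c_1 = c_2 = 0.8$ and $\ket{\phi_1}, \ket{\phi_2}$ on opposite sides of $\ket{\Omega}$ so that $|\braket{\phi_1}{\phi_2}| = \cos(\alpha_1 + \alpha_2)$; then $\|P_2 P_1\ket{\Omega}\|^2 = c_1^2 \cos^2(\alpha_1+\alpha_2) \approx 0.05$, well below $c_1^2 + c_2^2 - 1 = 0.28$. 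The trigonometric lemma does not bound the two-step success probability $\|P_2 P_1\ket{\Omega}\|^2$; it bounds the \emph{alignment} of the normalized post-measurement state with the original state, i.e.\ $|\braket{\Omega}{\widetilde{W}}|^2$ where $\ket{\widetilde{W}} = P_2 P_1\ket{\Omega}/\|P_2 P_1\ket{\Omega}\|$. Concretely, $\|P_2\ket{\phi_1}\|^2$ equals $T(y) + z^2$ with $z^2 \le U(y)$, whereas the ratio $c_2^2 T(y)/(T(y)+U(y))$ you invoke is the inner-product overlap, a different object. Your identification of $\|P_j\ket{e_i}\|^2$ with $c_j^2 T(y)/(T(y)+U(y))$ is where the proof breaks.

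The paper's proof keeps these two quantities separate and you need both: (i) the transition probability $\|W\|^2 \ge \cos^2(\alpha_1)\cos^2(\alpha_1+\alpha_2)$, and (ii) the alignment $|\braket{\Omega}{\widetilde{W}}|^2 \ge \cos^2(\alpha_1)+\cos^2(\alpha_2)-1$, which is what Lemma~\ref{lemma:TrigoIneq} gives. The third projector is then handled not by iterating the lemma (the step you correctly flag as uncertain) but by the triangle inequality for the angle metric $\mathrm{Angle}(\ket{\psi},\ket{\phi}) = \arccos|\braket{\psi}{\phi}|$: since $\widetilde{W}$ is close to $\ket{\Omega}$, $\mathrm{Angle}(\phi_3,\widetilde{W}) \le \alpha_3 + \beta$ with $\cos^2\beta = |\braket{\Omega}{\widetilde{W}}|^2$, and this forces $\|P_3\ket{\widetilde{W}}\|^2$ away from zero. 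Multiplying (i) and (ii)-via-(iii) and then symmetrizing over the two orderings to exploit $\cos^2\alpha_1 + \cos^2\alpha_2 \ge 1$ is exactly what produces the quartic bound. Your final tally, had the two-projector step been true, would instead give $E \gtrsim \eps^2$, which is suspiciously stronger than the statement; that mismatch is another symptom that the two-projector inequality cannot hold.
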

\begin{proof}
	Let $\ket{\phi_i} = \frac{P_i\ket{\Omega}}{\norm{P_i \ket{\Omega}}}$. 
	We write 
	\begin{align}
	\ket{\Omega} = \cos(\alpha_i) \ket{\phi_i} + \sin(\alpha_i) \ket{\phi_i^\bot} \ \textrm{ for } i \in \{1,2,3\}.
	\end{align} This means 
	$$V = \frac{1}{3} \left(\cos^2(\alpha_1) + \cos^2(\alpha_2) + \cos^2(\alpha_3)\right).$$
	We write 
	\begin{align}
	\ket{\phi_2} & = \cos(\alpha_2)\ket{\Omega} + \sin(\alpha_2)\ket{B}
	\end{align} for some pure state $\ket{B} \bot \ket{\Omega}$ and 
	\begin{align}\ket{\phi_1} = \cos(\alpha_1)\ket{\Omega} + x \ket{A} + y \ket{B}
	\end{align} for some pure state $\ket{A} \bot \ket{B}$ and $\ket{A} \bot \ket{\Omega}$. This means we have 
	\begin{align}\label{Eq:C6}
		\cos^2(\alpha_1) + |x|^2 + |y|^2 = 1.
	\end{align}
	We also write 
	\begin{align}P_2 = \kb{\phi_2} + P'_2 \ \textrm{ with } P'_2 \ket{\phi_2} = \zv.
	\end{align} We have 
	\begin{align}
		\ket{W} = P_2P_1(\ket{\Omega}) & = \cos(\alpha_1)(P_2 \ket{\phi_1}) \nonumber \\
		& = \cos(\alpha_1)\braket{\phi_1}{\phi_2}\ket{\phi_2} + \cos(\alpha_1)P'_2\ket{\phi_1}
	\end{align}
	
	Notice that $P_2 \ket{\Omega} = \cos(\alpha_2)\ket{\phi_2} = \kb{\phi_2} \cdot \ket{\Omega}$ hence $P'_2 \ket{\Omega} = \zv$. This implies that $P'_2 \ket{B} = \zv$ and  
	\begin{align}
	P'_2 \ket{\phi_1} = P'_2 (x\ket{A}) = z \ket{A'}
	\end{align}
	for some $\ket{A'}$ orthogonal to $\ket{\Omega}$ and $\ket{B}$ and $|z| \le |x|$. So we rewrite
	\begin{align}
		\frac{1}{\cos(\alpha_1)}\ket{W} & = \braket{\phi_1}{\phi_2}\ket{\phi_2} + P'_2\ket{\phi_1} \nonumber \\
		& = \left(\cos{\alpha_1}\cos(\alpha_2) + \sin(\alpha_2)y\right) \ket{\phi_2} + z \ket{A'} 
	\end{align}
	Let $u = \left(\cos{\alpha_1}\cos(\alpha_2) + \sin(\alpha_2)y\right)$ so that 
	\begin{align}\label{Eq:C10}
	\ket{W} = \cos(\alpha_1)u\ket{\phi_2} + \cos(\alpha_1)z\ket{A'}
	\end{align} The norm of $\ket{W}$ is therefore
	$$ \norm{\ket{W}} = |\cos(\alpha_1)| \sqrt{|u|^2 + |z|^2}.$$
	Let $\ket{\widetilde{W}} = \ket{W}/\norm{\ket{W}}$. From Equation \ref{Eq:C10}, we have 
	\begin{align*}|\braket{\Omega}{\widetilde{W}}|^2  & = (\frac{|\cos(\alpha_2)\cos(\alpha_1)u|}{\norm{\ket{W}}})^2 = \frac{\cos^2(\alpha_2)u^2}{u^2 + z^2} \\ & \ge \frac{\cos^2(\alpha_2)u^2}{u^2 + (1 - \cos^2(\alpha_1) - y^2)} .\end{align*}
	
	Using lemma \ref{lemma:TrigoIneq}, we obtain  $|\braket{\Omega}{\widetilde{W}}|^2  \ge \cos^2(\alpha_1) + \cos^2(\alpha_2) - 1$. We hence write $|\braket{\Omega}{\widetilde{W}}|^2  = \cos^2(\beta)$ for some $\beta \le \alpha_1 + \alpha_2$. In order to conclude, we define $Angle(\ket{\psi},\ket{\phi}) = \arccos(|\braket{\psi}{\phi}|)$. The angle function is a distance measure \cite{NC00}. We will use several times the trigonometric inequality $\cos(\rho + \theta) \ge \cos^2(\rho) + \cos^2(\theta) - 1$ for any $\rho,\theta$ with $\rho + \theta \le \pi/2$  and we can hence write 
	\begin{align*}
		\norm{P_3 \ket{\widetilde{W}}}^2 & \ge |\braket{\phi_3}{\widetilde{W}}|^2 \\
		& = \cos^2\left(Angle\left(\ket{\phi_3},\ket{\widetilde{W}}\right)\right) \\
		& \ge \cos^2\left(Angle(\ket{\phi_3},\ket{\Omega}) + Angle(\ket{\Omega},\ket{\widetilde{W}})\right)  \\
		& \ge \cos^2(\alpha_3 + \beta) \\
		& \ge \left(\cos^2(\alpha_3) + \cos^2(\beta) - 1\right)^2 \\
		& = \left(\cos^2(\alpha_1) + \cos^2(\alpha_2) + \cos^2(\alpha_3) - 2\right)^2 = \eps^2
	\end{align*}
	From there, we can conclude 
	\begin{align*}\norm{P_3P_2P_1 \ket{\Omega}}^2 & = \norm{P_3 \ket{W}}^2 = \norm{P_3 \ket{\widetilde{W}}}^2 \norm{W}^2 \\& \ge \eps^ 2 \norm{P_2P_1\ket{\Omega}}^2
	\end{align*}
	In order to conclude, we use $ |y| \le sin(\alpha_1)$ (Equation \ref{Eq:C6}) which gives $|u| \ge \cos(\alpha_1 + \alpha_2)$, and 
	$$\norm{P_2P_1\ket{\Omega}}^2 = \norm{W}^2 \ge \cos^2(\alpha_1)\cos^2(\alpha_1 + \alpha_2).$$
	which gives 
	\begin{align}
	\norm{P_3P_2P_1 \ket{\Omega}}^2 \ge \eps^2 \cos^2(\alpha_1)\cos^2(\alpha_1 + \alpha_2)
	\end{align}
	Similarly, we have
	\begin{align}\norm{P_3P_1P_2 \ket{\Omega}}^2 & =  \eps^2\norm{P_1P_2 \ket{\Omega}}^2 \norm{P_3 \ket{\widetilde{W}}}^2 \nonumber \\& \ge \eps^ 2 \cos^2(\alpha_2)\cos^2(\alpha_1 + \alpha_2) \end{align}
	\begin{multline*}
		\frac{1}{2} \left(\norm{P_3P_2P_1 \ket{\Omega}}^2 + \norm{P_3P_1P_2 \ket{\Omega}}^2\right) \ge \\
		\frac{\eps^2}{2}(\cos^2(\alpha_1) + \cos^2(\alpha_2))cos^2(\alpha_1 + \alpha_2)
	\end{multline*}
Now,  since $cos^2(\alpha_1) + \cos^2(\alpha_2) \ge 1 + 3 \eps$ (from $V \ge \frac{2}{3} + \eps$), we have $\cos^2(\alpha_1) + \cos^2(\alpha_2) \ge 1$ and $\cos^2(\alpha_1 + \alpha_2) \ge \left(\cos^2(\alpha_1) + \cos^2(\alpha_2) - 1\right)^2 = (3\eps)^2$, from which we conclude 
	$$\frac{1}{2} \left(\norm{P_3P_2P_1 \ket{\Omega}}^2 + \norm{P_3P_1P_2 \ket{\Omega}}^2\right) \ge \frac{9\eps^4}{2}.$$
\end{proof}

We proved our main proposition for $S_1,S_2 = 1$. From there, we can directly go to the general case in similar way than in \cite{CL17}.We need the following statement 

\begin{lemma}[Proposition $4$ from \cite{CL17}]\label{Lemma:FromCL17}
	Let a projector $P = \sum_{i = 1}^m$ where $\{P_i\}_{i \in [m]}$ are orthogonal projectors. For any pure state $\ket{\psi}$, we have 
	$$ \sum_{i = 1}^m P_i \kb{\psi} P_i \ge \frac{1}{m} P \kb{\psi} P.$$
\end{lemma}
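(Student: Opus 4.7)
The plan is to reduce this to a one-line Cauchy--Schwarz inequality by passing from the projectors to the vectors they produce on $\ket{\psi}$. Concretely, define $\ket{\phi_i} \defeq P_i \ket{\psi}$ for $i \in [m]$. Using $P_i P_j = \delta_{i,j} P_i$, one has $P \ket{\psi} = \sum_i \ket{\phi_i}$, hence $P \kb{\psi} P = \sum_{i,j} \ket{\phi_i}\bra{\phi_j}$, and similarly $\sum_i P_i \kb{\psi} P_i = \sum_i \ket{\phi_i}\bra{\phi_i}$. The orthogonality of the $P_i$ is only used in this rewriting step; after it, the lemma becomes the purely vector-level operator inequality $\sum_i \ket{\phi_i}\bra{\phi_i} \ge \frac{1}{m} \sum_{i,j} \ket{\phi_i}\bra{\phi_j}$, valid for any finite family of vectors.

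To establish this inequality I would test both sides against an arbitrary vector $\ket{v}$. On the one hand, $\bra{v} P \kb{\psi} P \ket{v} = \bigl|\sum_i \braket{v}{\phi_i}\bigr|^2$; on the other hand, $\bra{v}\bigl(\sum_i \ket{\phi_i}\bra{\phi_i}\bigr)\ket{v} = \sum_i |\braket{v}{\phi_i}|^2$. Applying the Cauchy--Schwarz inequality in $\mathbb{C}^m$ to the pair $(1,\dots,1)$ and $(\braket{v}{\phi_1},\dots,\braket{v}{\phi_m})$ yields $\bigl|\sum_i \braket{v}{\phi_i}\bigr|^2 \le m \sum_i |\braket{v}{\phi_i}|^2$. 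Dividing by $m$ and noting that this holds for every $\ket{v}$ gives exactly the desired operator inequality. Equivalently, one could make the positivity manifest by noting the algebraic identity $m \sum_i \ket{\phi_i}\bra{\phi_i} - \sum_{i,j} \ket{\phi_i}\bra{\phi_j} = \sum_{i<j} (\ket{\phi_i}-\ket{\phi_j})(\bra{\phi_i}-\bra{\phi_j})$, which is a sum of rank-one positive semidefinite operators.

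I do not foresee any real obstacle here: the content of the statement is essentially a bookkeeping reduction followed by Cauchy--Schwarz. The only point that warrants attention is making sure the orthogonality relations really do collapse the cross terms in $P \kb{\psi} P$ to $\ket{\phi_i}\bra{\phi_j}$ with no residual projector factors, so that the reduction to a family of vectors is exact. Once this is verified, the conclusion follows immediately.
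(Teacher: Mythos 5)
Your proof is correct, and it is the standard argument: testing the operator inequality against an arbitrary vector $\ket{v}$ reduces it to the scalar Cauchy--Schwarz bound $\bigl|\sum_i \braket{v}{\phi_i}\bigr|^2 \le m\sum_i |\braket{v}{\phi_i}|^2$ with $\ket{\phi_i}=P_i\ket{\psi}$, which is exactly how Proposition 4 is established in \cite{CL17}; the present paper merely cites that result rather than re-deriving it. One small refinement worth recording: as you half-observe, the mutual orthogonality $P_iP_j=\delta_{ij}P_i$ is never actually invoked -- the only property of the $P_i$ used is self-adjointness (to identify $\bra{\psi}P_i$ with $\bra{\phi_i}$), so the inequality holds verbatim for any decomposition $P=\sum_i P_i$ into Hermitian operators, and your closing sum-of-squares identity $m\sum_i \kb{\phi_i} - \sum_{i,j}\ket{\phi_i}\bra{\phi_j} = \sum_{i<j}(\ket{\phi_i}-\ket{\phi_j})(\bra{\phi_i}-\bra{\phi_j})$ makes the positivity manifest without any appeal to Cauchy--Schwarz at all.
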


With this lemma, we can prove our main proposition.

\begin{proof}[Proof of Proposition \ref{Proposition:Hard}]
	First, in order to use Proposition \ref{Proposition:S=1}, we need to work on pure states similarly as in \cite{CL17}. Assume $\sigma$ is a quantum  mixed state in some Hilbert space $\bb$ and the projectors $P_i^s$ act on $\bb$.  
	We add an extra Hilbert space
	$\mathcal{E}$. We consider a purification $\ket{\Omega}$ of $\sigma$ in $\bb \mathcal{E}$ and define $\widetilde{P}_i = P_i \otimes I_E$,  $\widetilde{P}_i ^s= P_i^s \otimes I_E$, $\widetilde{V} = \sum_i tr(\widetilde{P}_i\ket{\Omega})$ and 
	$$ \widetilde{E} = \frac{1}{2}\left(\widetilde{E}_1 + \widetilde{E}_2\right)$$
	with 
	\begin{align*}
		\widetilde{E}_1 & \eqdef \Big(\sum_{s_3 = 1}^{S_3} \sum_{s_2 = 1}^{S_2} \sum_{s_1 = 1}^{S_1}  tr\left(\widetilde{P}_3^{s_3}\widetilde{P}_2^{s_2}\widetilde{P}_1^{s_1} \kb{\Omega}  \left(\widetilde{P}_1^{s_1}\right) \left(\widetilde{P}_2^{s_2}\right)\right)\\
		\widetilde{E}_2 & = \sum_{s_3 = 1}^{S_3} \sum_{s_2 = 1}^{S_2} \sum_{s_1 = 1}^{S_1} tr\left(\widetilde{P}_3^{s_3}\widetilde{P}_1^{s_1}\widetilde{P}_2^{s_2} \kb{\Omega}  \left(\widetilde{P}_2^{s_2}\right) \left(\widetilde{P}_1^{s_1}\right) \right)\Big).
	\end{align*}
	One can easily check that $\widetilde{V} = V$ and $\widetilde{E} = E$.
	Now, using Lemma \ref{Lemma:FromCL17} twice, we have
	\begin{align*}
		\widetilde{E}_1 & \ge \frac{1}{S_1} \sum_{s_3 = 1}^{S_3} \sum_{s_2 = 1}^{S_2}  tr\left(\widetilde{P}_3^{s_3}\widetilde{P}_2^{s_2}\widetilde{P}_1 \kb{\Omega}  \widetilde{P}_1  \left(\widetilde{P}_2^{s_2}\right)\right) \\
		& \ge \frac{1}{S_1S_2}\sum_{s_3 = 1}^{S_3}   tr\left(\widetilde{P}_3^{s_3}\widetilde{P}_2 \widetilde{P}_1 \kb{\Omega}  \widetilde{P}_1  \widetilde{P}_2 \right) \\
		& = \frac{1}{S_1S_2} tr(\widetilde{P}_3\widetilde{P}_2\widetilde{P}_1\kb{\Omega} \widetilde{P}_1\widetilde{P}_2)
	\end{align*}
	Similarly, we can prove 
	$$ \widetilde{E}_2 \ge \frac{1}{S_1S_2} tr(\widetilde{P}_3\widetilde{P}_1\widetilde{P}_2\kb{\Omega} \widetilde{P}_2\widetilde{P}_1).$$
	In order to conclude, we  use Proposition \ref{Proposition:S=1} which directly gives us the desired result. 
\end{proof}

\section{Security proof of the full scheme and loss-tolerance}\label{Appendix:FullProtocol}
\subsection{Analysis of loss tolerance}
We consider the $R$ round protocol where we allow at most $F$ aborts and let $\lambda = \frac{F}{R}$. An adversary can of course use this allowed number of losses to cheat in the protocol. We consider cheating provers and assume here that the hardware is perfect, which can only help the cheating provers. At the first round, the provers can perform a strategy that aborts with probability $\lambda^*$ and for which they win the game with probability $p^*$ conditioned on not aborting. The probability $P^*(R,F)$ of cheating is therefore 
$$ P^*(R,F) = \lambda^* P_1+ (1-\lambda^*)p^* P_2.$$
where $P_1$ is the probability that the provers win on the $R-1$ remaining rounds and they have $F-1$ aborts left, and $P_2$ is the probability that the provers win on the $R-1$ remaining rounds and they have $F$ aborts left. While computing $P_1$ and $P_2$ is hard, notice that $P_2 \ge P_1$. Moreover, $(1- \lambda^*)p^* \le \omega^*(\GRS)$ and $p^* \le 1$, so $(1-\lambda^*)p^* \le \min\{\omega^*(\GRS),1-\lambda^*\}$. We now distinguish $2$ cases
\begin{itemize}
	\item If $\lambda^* \in [0,1 - \omega^*(\GRS)]$, $P^* \le \lambda^* P_1 + \omega^*(\GRS) P_2$ and this right hand side is increasing in $\lambda^*$.
	\item If $\lambda^* \in [1 - \omega^*(\GRS),1]$, $P^* \le \lambda^* P_1 + (1 - \lambda^*) P_2 = P_2 + \lambda^*(P_1 - P_2)$ and this right hand side is decreasing in $\lambda^*$ since $P_2 \ge P_1$.
\end{itemize}
This shows that the best strategy for the prover is to take at the first round $\lambda^* = 1 - \omega^*(\GRS)$. Notice that the above reasoning is independent on the number of remaining rounds or the number of allowed aborts. This means the same argument can be applied to each round. From there, we have that the provers optimal strategy at each round is to perform a strategy that aborts wp. $\lambda^* = 1 - \omega^*(\GRS)$. In this case, we potentially have $p^* =1$ so the provers will win all the games where they don't abort but they will most probably abort too often, for well chosen parameters. Let $P^*_{losses}(R,F)$ be the probability that the provers perform less than $F$ aborts with this strategy. We use the following Chernoff bound
\begin{proposition}[Additive Chernoff bound]
Suppose $X_1,\dots,X_n$ are independent random variables taking value in $\{0,1\}$. Let $X$ denote their sum, $p = \E[X_1]$ and $\eps > 0$. We have 
\begin{align*}\Pr[X \ge pn+ \eps n] & \le 2^{n\left((p+\eps)\log_2(\frac{p}{p+\eps}) + (1-p-\eps)\log_2(\frac{1-p}{1-p-\eps})\right)}, \\
	\Pr[X \le pn - \eps n] & \le 2^{n\left((p - \eps)\log_2(\frac{p}{p- \eps}) + (1-p+\eps)\log_2(\frac{1-p}{1-p+\eps})\right)}.
\end{align*}
\end{proposition}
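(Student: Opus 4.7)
The plan is to apply the standard Chernoff--Cram\'er exponential moments method. Since the $X_i$ are independent $\{0,1\}$-valued with common mean $p$, each $X_i$ is Bernoulli$(p)$, so the moment generating function factorizes as $\E[e^{tX}] = \prod_{i=1}^n \E[e^{tX_i}] = (1-p+pe^t)^n$ for every real $t$. For the upper tail, fix $t > 0$ and apply Markov's inequality to the non-negative random variable $e^{tX}$:
\[
\Pr[X \ge (p+\eps)n] \;=\; \Pr\!\left[e^{tX} \ge e^{t(p+\eps)n}\right] \;\le\; \bigl((1-p+pe^t)\,e^{-t(p+\eps)}\bigr)^n,
\]
which is valid for every $t > 0$.

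Next I would optimize the right-hand side over $t$. Differentiating the logarithm of the bracketed factor in $t$ and setting the derivative to zero gives the unique stationary point characterised by $\frac{pe^t}{1-p+pe^t} = p+\eps$, which rearranges to $e^t = \frac{(p+\eps)(1-p)}{p(1-p-\eps)}$ (well defined since $\eps < 1-p$, else the bound is vacuous). Substituting this value collapses $1-p+pe^t$ to $\frac{1-p}{1-p-\eps}$, and a short algebraic manipulation shows that the exponent (in natural logs) becomes
\[
(p+\eps)\ln\tfrac{p}{p+\eps} \;+\; (1-p-\eps)\ln\tfrac{1-p}{1-p-\eps},
\]
i.e.\ minus the binary KL-divergence between the Bernoulli laws with parameters $p+\eps$ and $p$. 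Converting $\ln$ into $\log_2$ (which only multiplies the whole exponent by a positive constant and rewrites the base of the exponential from $e$ to $2$) yields exactly the first claimed inequality.

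For the lower tail, the cleanest route is reduction rather than a separate optimisation: set $Y_i \defeq 1 - X_i$ and $Y \defeq \sum_{i=1}^n Y_i = n - X$. The $Y_i$ are independent $\{0,1\}$-valued with $\E[Y_i] = 1-p$, and the event $\{X \le pn - \eps n\}$ coincides with $\{Y \ge (1-p)n + \eps n\}$. Applying the already proved upper-tail bound to $(Y_i, 1-p, \eps)$ in place of $(X_i, p, \eps)$ produces the second claimed inequality verbatim, after swapping the roles of $p$ and $1-p$ in the KL-divergence expression.

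The proof is entirely classical and presents no conceptual obstacle; the only slightly tedious step is the algebraic simplification that collapses $1-p+pe^t$ at the optimal $t$ into $\frac{1-p}{1-p-\eps}$ and reveals the exact KL-divergence form of the exponent. As a sanity check, setting $\eps = 0$ turns each exponent into $0$ and gives the trivial bound $\Pr \le 1$, and the symmetry $p \leftrightarrow 1-p$ between the two displayed inequalities is consistent with the reduction $X \mapsto n-X$ used for the lower tail.
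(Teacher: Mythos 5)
Your proof is correct. Note that the paper itself states this proposition as a standard fact (the additive Chernoff--Hoeffding bound with the Kullback--Leibler exponent) and gives no proof at all, so there is nothing to compare against: your exponential-moment argument --- Markov's inequality applied to $e^{tX}$, optimization at $e^t = \frac{(p+\eps)(1-p)}{p(1-p-\eps)}$ collapsing the exponent to $-D(p+\eps\,\|\,p)$ in nats, conversion to base $2$, and the complement trick $Y_i = 1-X_i$ for the lower tail --- is exactly the canonical derivation one would cite. Two cosmetic remarks: the statement only specifies $p=\E[X_1]$, and your reading that the $X_i$ are identically distributed Bernoulli$(p)$ is the intended one (it is how the paper applies the bound, with each round an independent Bernoulli trial); and the bound as written implicitly assumes $\eps < 1-p$ (resp.\ $\eps < p$ for the lower tail) so that the logarithms are defined, with the boundary case $\eps = 1-p$ recovering the tight value $p^n$ under the convention $0\log_2\frac{1-p}{0}=0$ --- you flag this correctly.
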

In our case, the strategy of the provers outputs `Abort' wp. $\lambda^*$ and they succeed in cheating if there are at most $F = \lambda R$ aborts. We use the above Chernoff bound (second equation) with  $n = R, p = \lambda^* = 1 - \omega^*(\GRS), \eps = \lambda^* - \lambda$ which bounds the probability that there are less than $F = \lambda n$ aborts, and we write 
\begin{align}\label{Equation:D2}
P^*(R,F) & \le  2^{R\left(\lambda\log_2(\frac{\lambda^*}{\lambda}) + (1-\lambda)\log_2(\frac{1-\lambda^*}{1-\lambda})\right)} \nonumber \\
& = 2^{R\left(\lambda\log_2(\frac{1 - \omega^*(\GRS)}{\lambda}) + (1-\lambda)\log_2(\frac{\omega^*(\GRS)}{1-\lambda})\right)} 
\end{align}

\subsection{Analysis of the completeness error}
We can also use the above Chernoff bound to bound the completeness error, {\ie} the probability that the protocol aborts when all players are honest. We allow up to $F$ aborts and assume we have parameters for which the signal doesn't arrive in time with some probability $p_{\text{loss}}$. Let $CE(R,F,p_{\text{loss}})$ denote the completeness error and $\lambda = \frac{F}{R}$. Using again the Chernoff bound (first equation) with $R = n, p = p_{\text{loss}}, \eps = \lambda - p_{\text{loss}}$, we have 
\begin{align}\label{Equation:D3}CE(R,F,p_{\text{loss}}) \le 2^{R\left(\lambda\log_2(\frac{p_{\text{loss}}}{\lambda}) + (1-\lambda)\log_2(\frac{1 - p_{\text{loss}}}{1-\lambda})\right)}.
\end{align}

\vspace*{-1cm}
\section{Other parameters}\label{Appendix:OtherValues}
The plots we present in the main text correspond to $n=1704$ which allows us to have $100$ bits of security. We present here the plots for other values of $n$, for our $2$ scenarios, to present the scaling of our scheme. 

\paragraph{First scenario, different values of $n$.} $ \ $ \\ \\

	\includegraphics[width = 8cm]{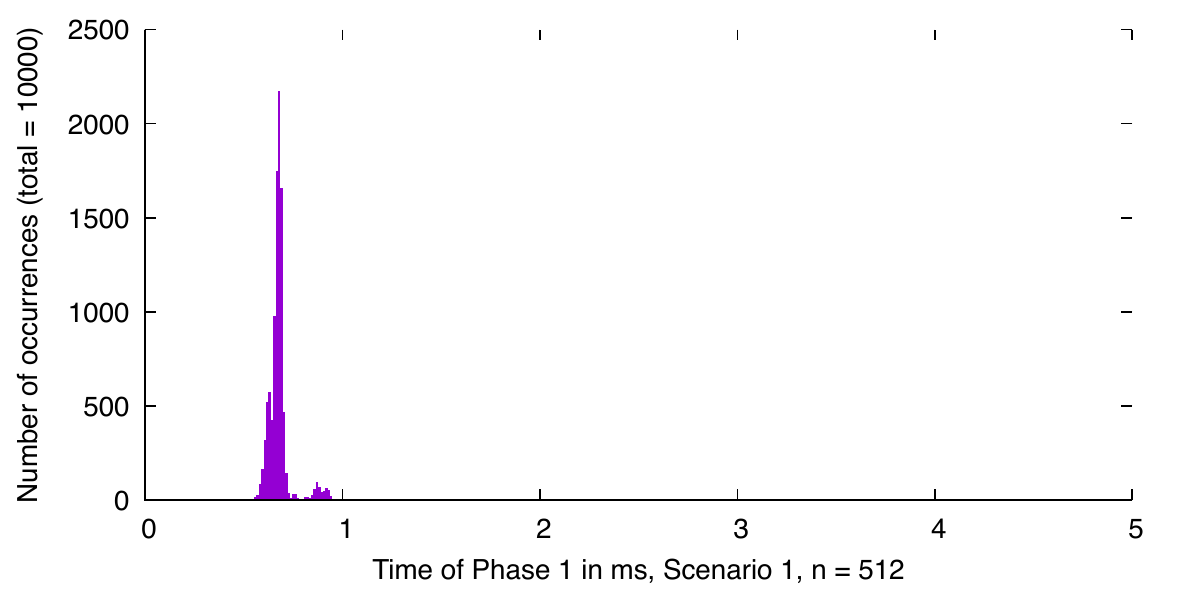}
	
	\includegraphics[width = 8cm]{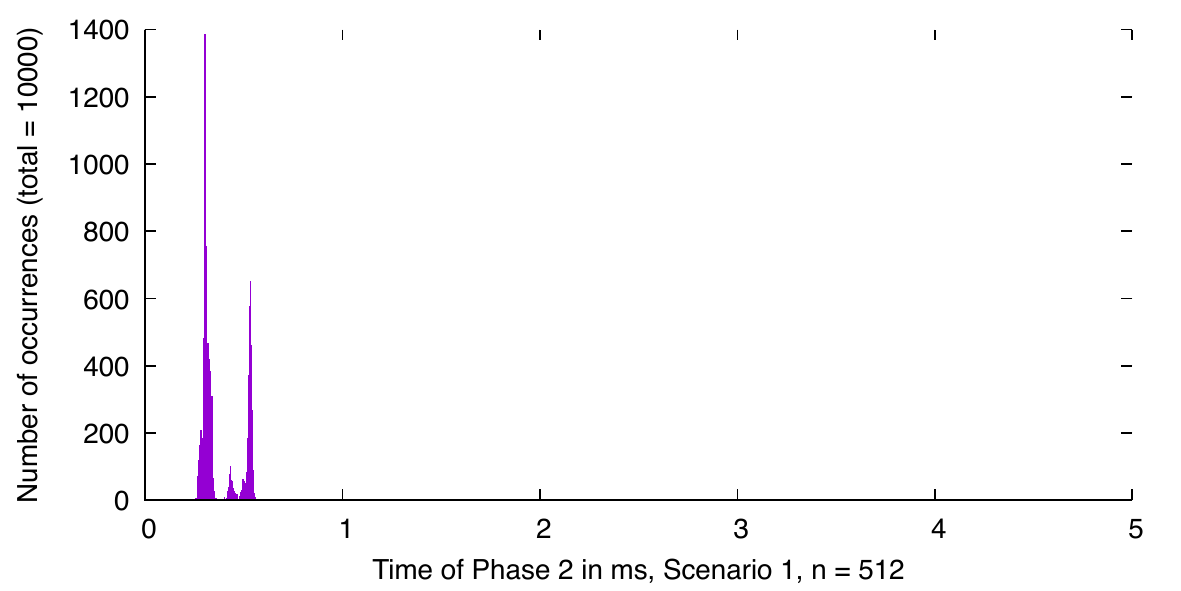} 

	\includegraphics[width = 8cm]{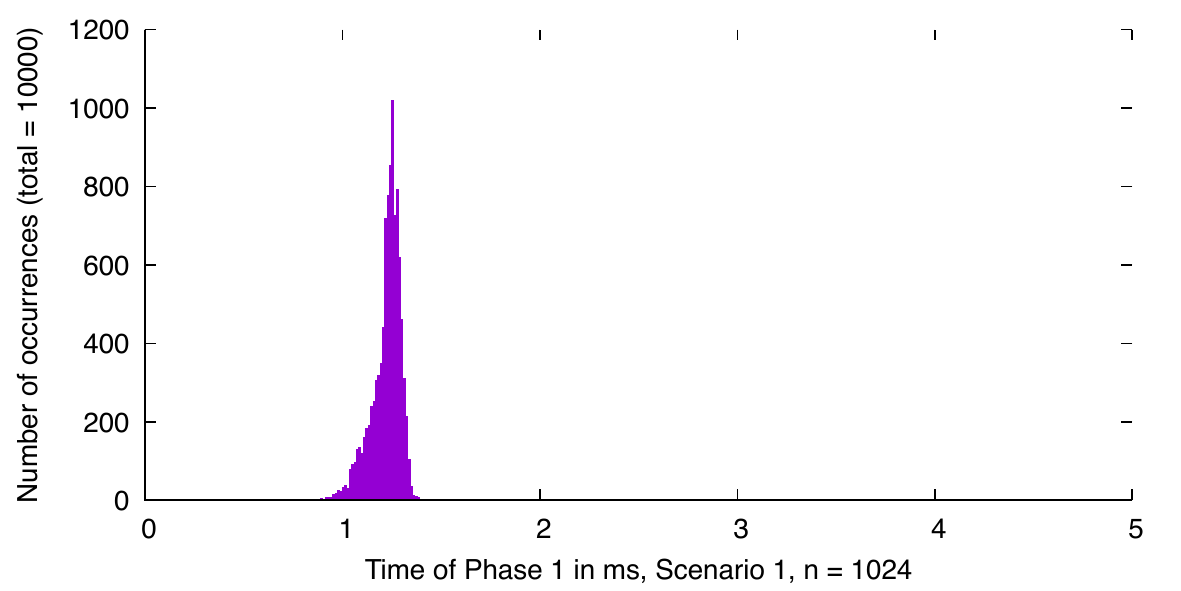}
	
   \includegraphics[width = 8cm]{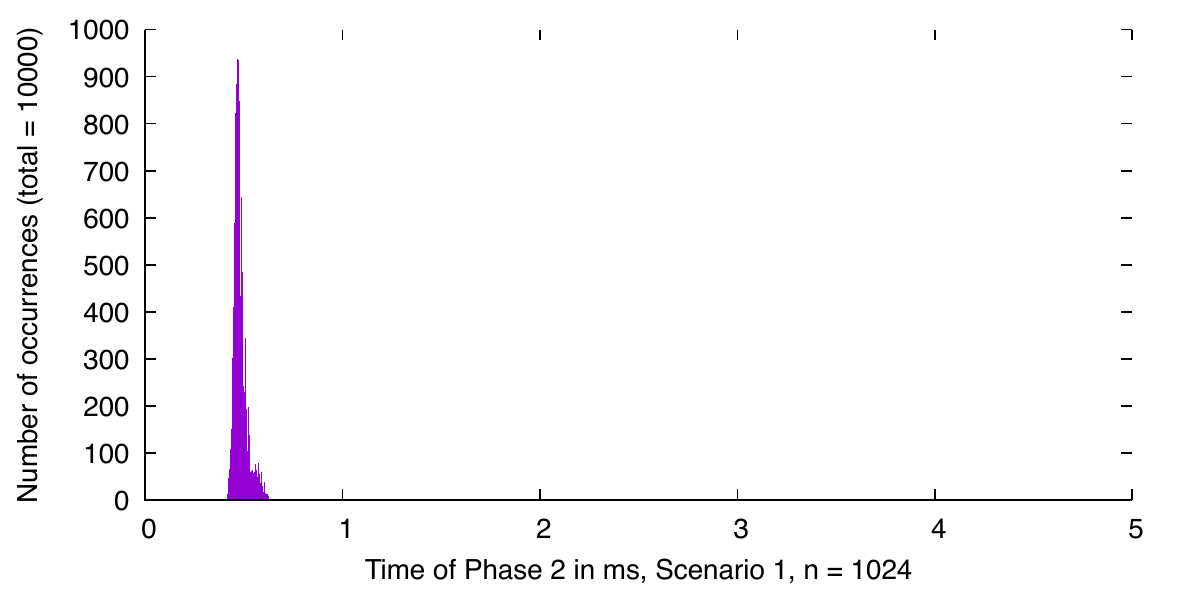} 

	\includegraphics[width = 8cm]{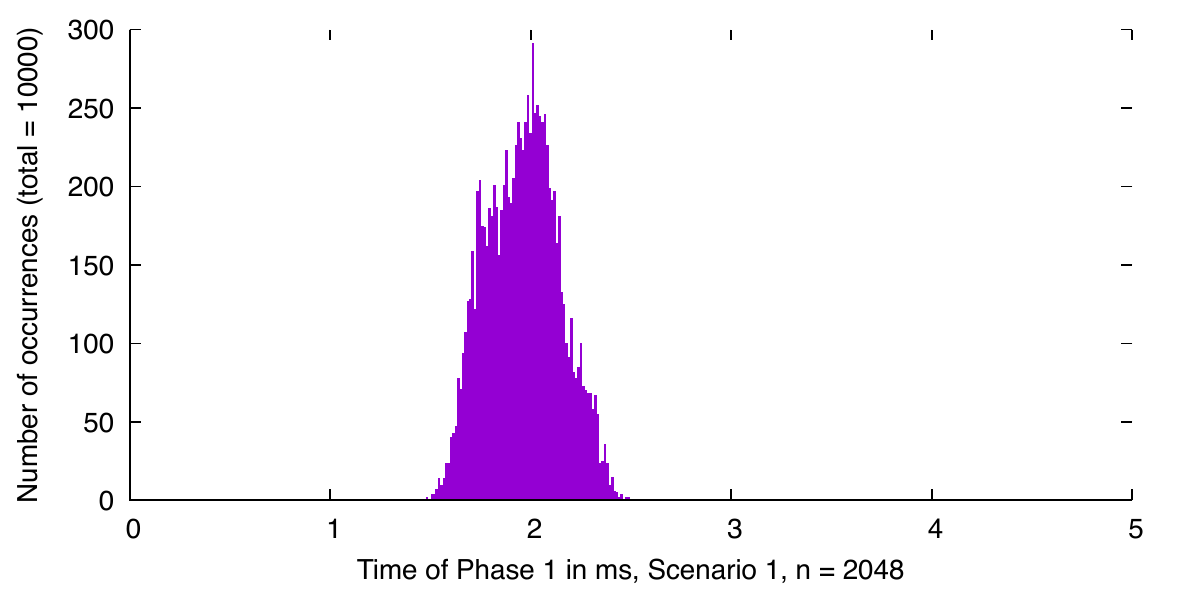}
	
	\includegraphics[width = 8cm]{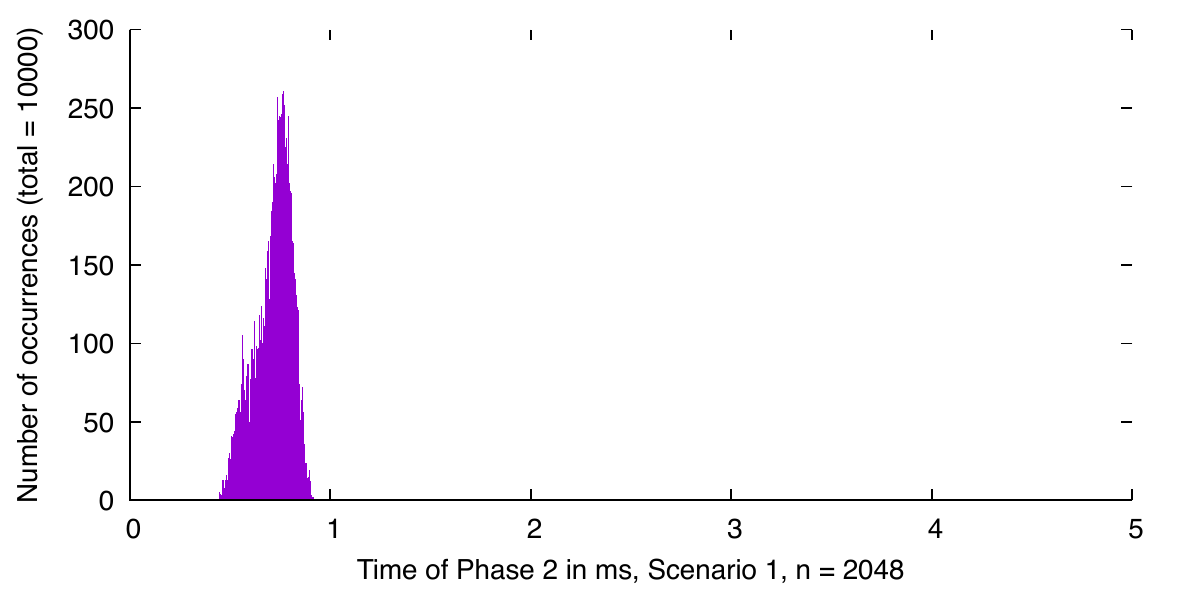}

	\includegraphics[width = 8cm]{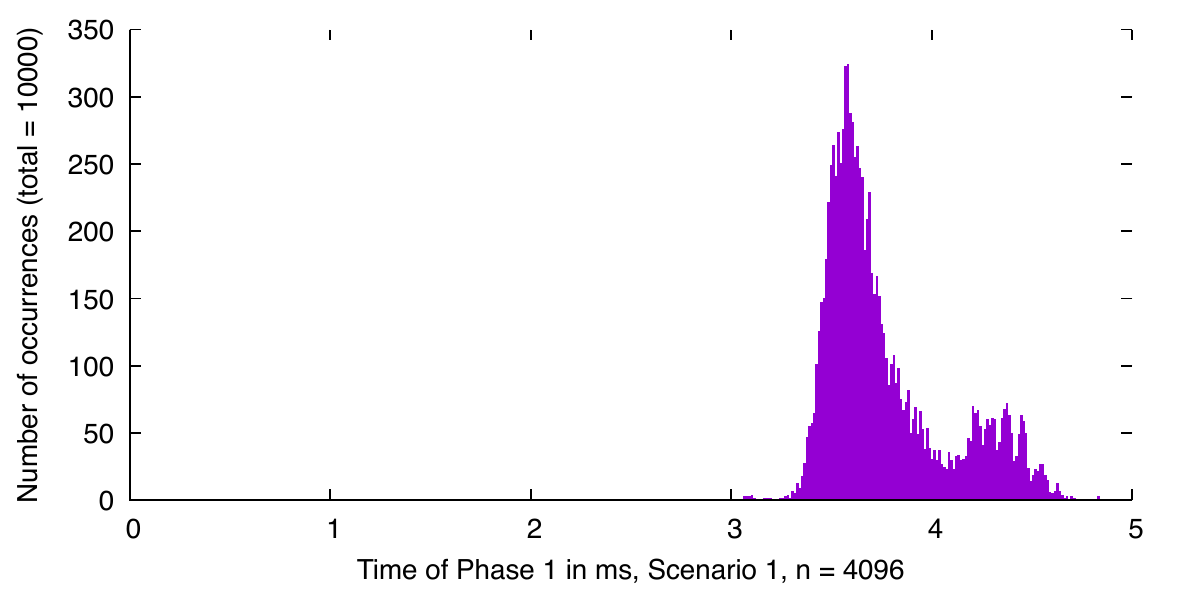}
	
	\includegraphics[width = 8cm]{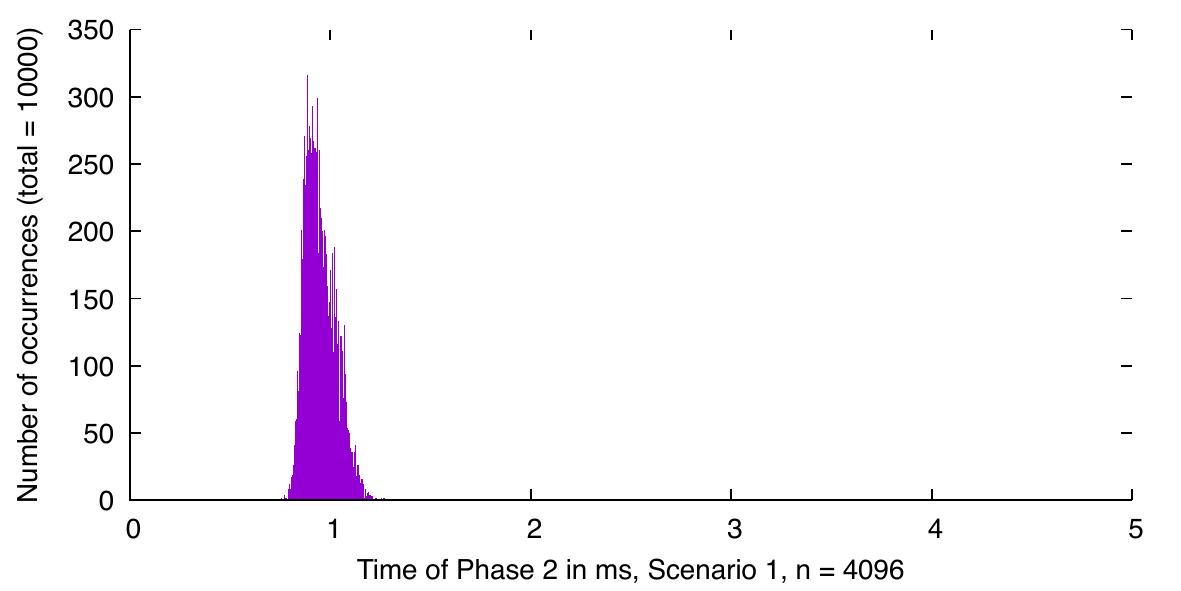}
\newpage
\paragraph{Second scenario, different values of $n$.}
$ \ $ 

\includegraphics[width = 8cm]{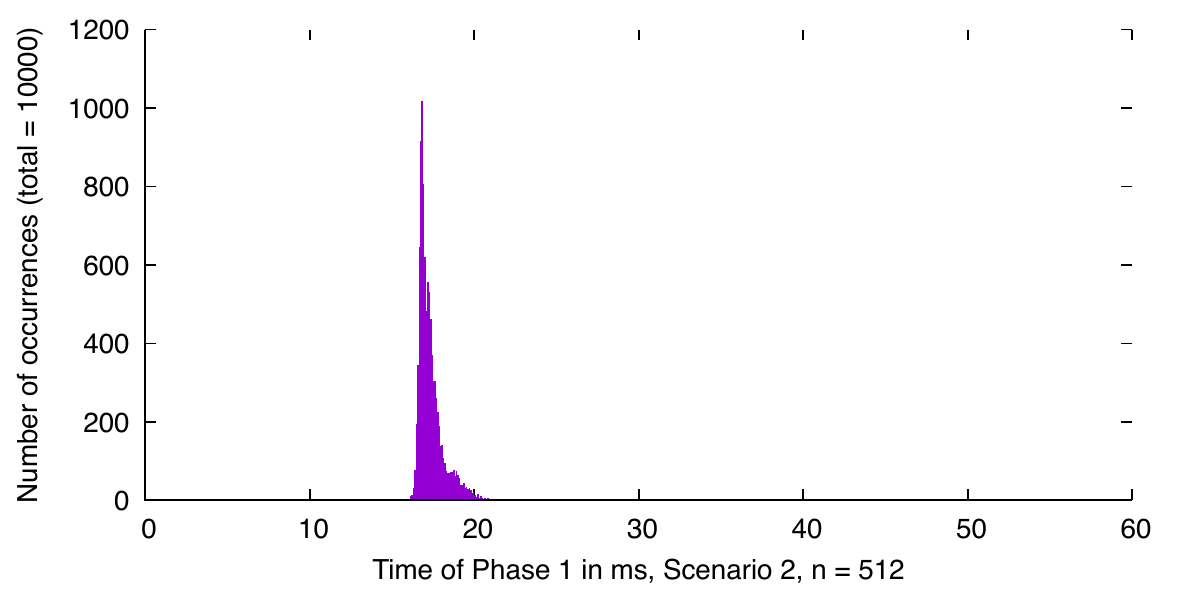}

\includegraphics[width = 8cm]{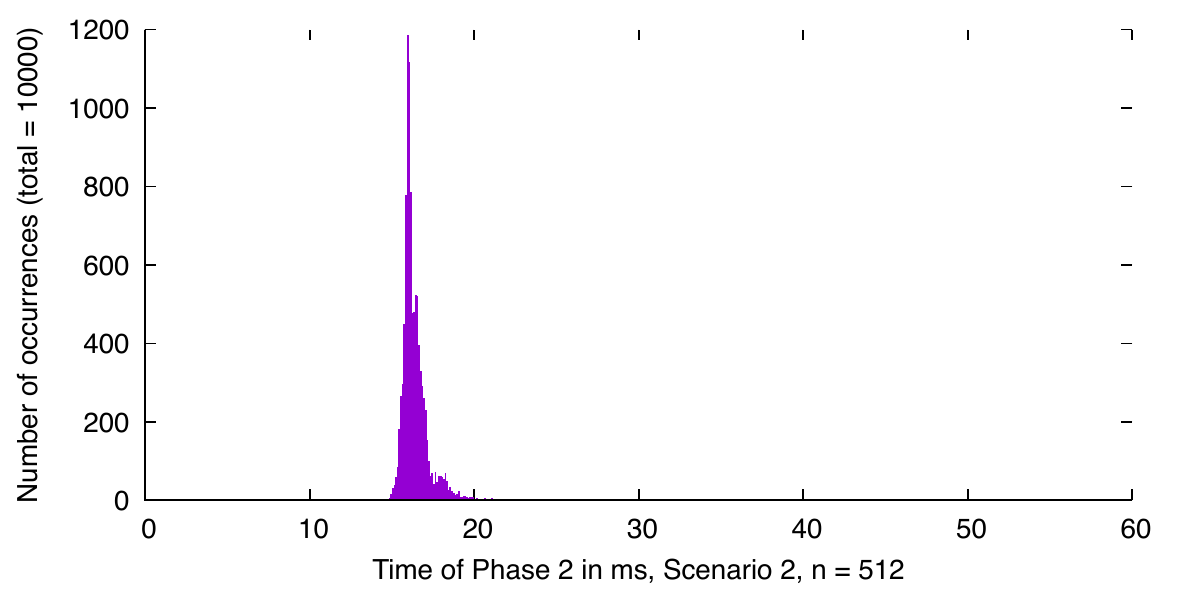} 

\includegraphics[width = 8cm]{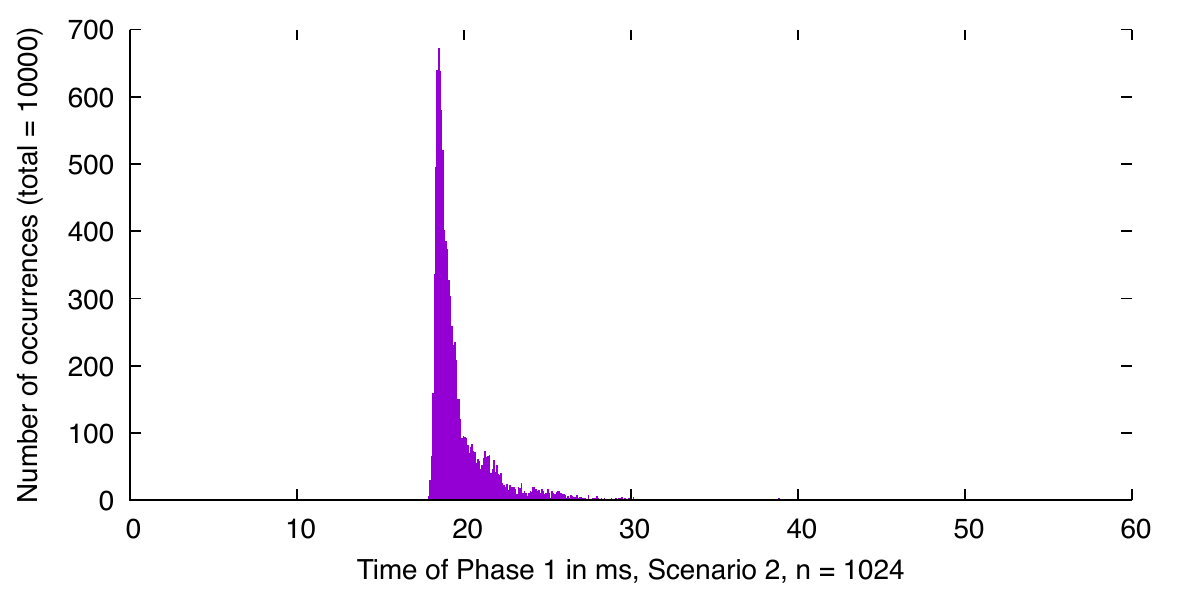}

\includegraphics[width = 8cm]{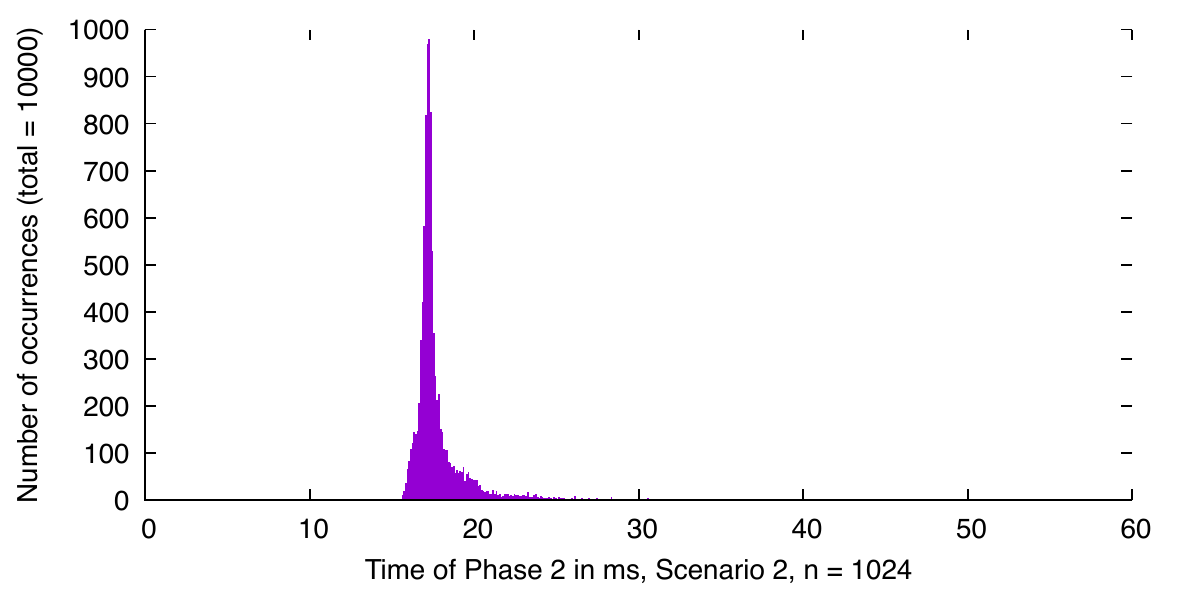} 

 \includegraphics[width = 8cm]{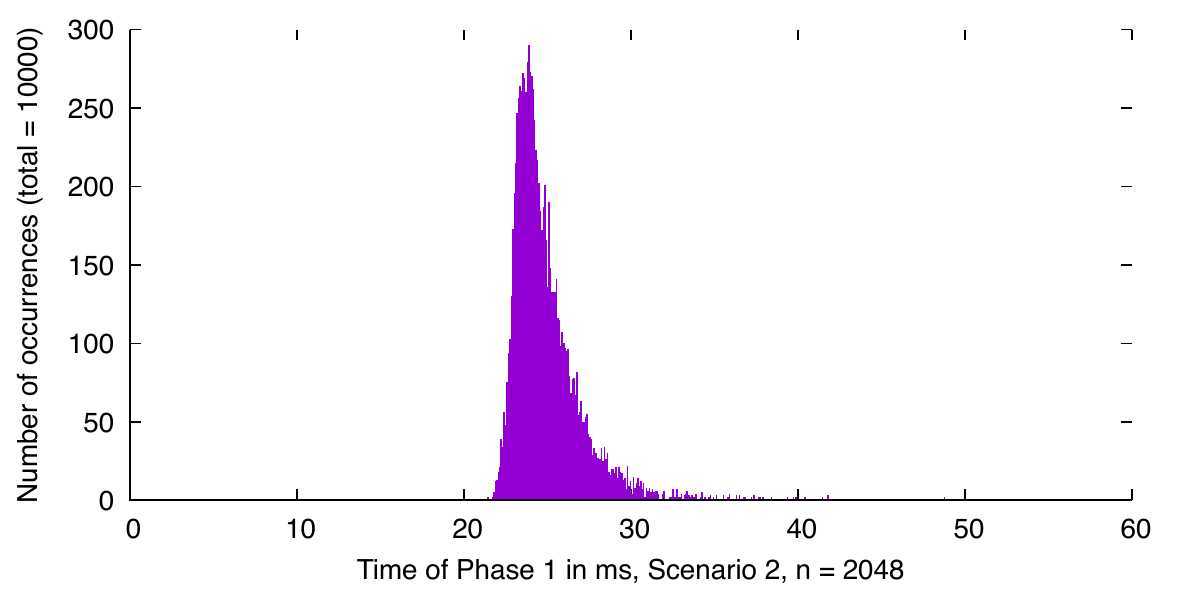} 
 
\includegraphics[width = 8cm]{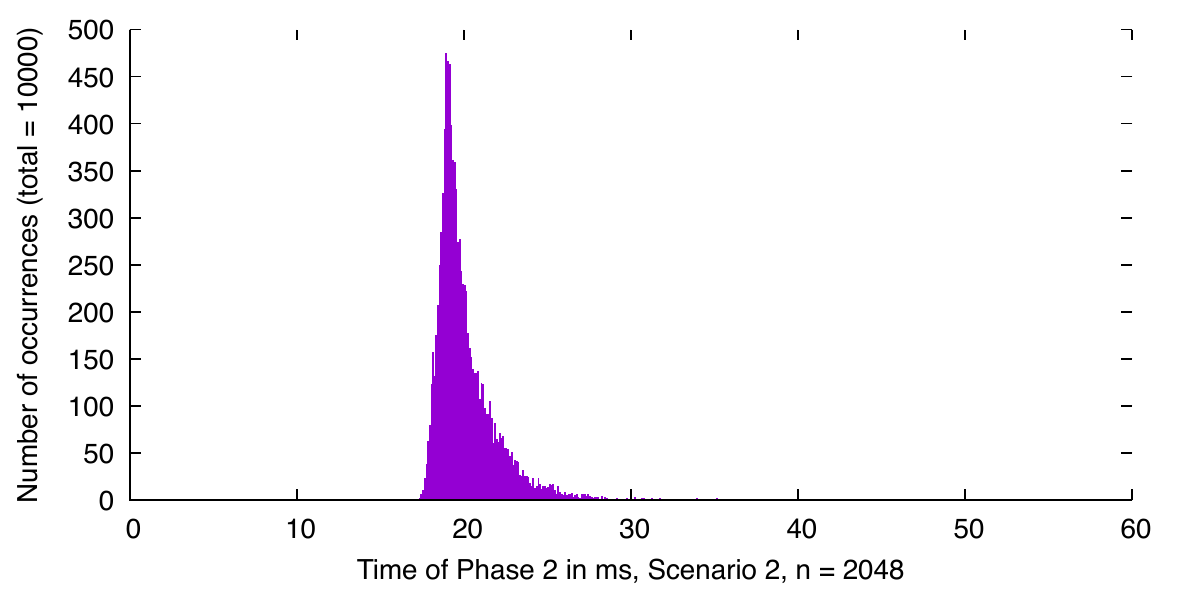} 

\includegraphics[width = 8cm]{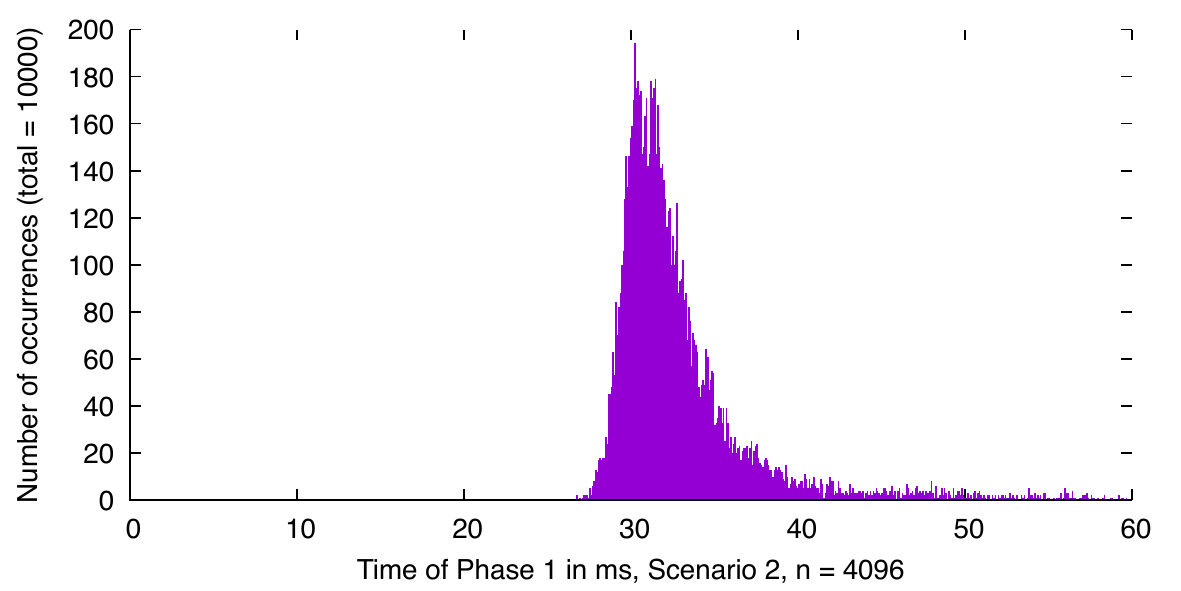}

\includegraphics[width = 8cm]{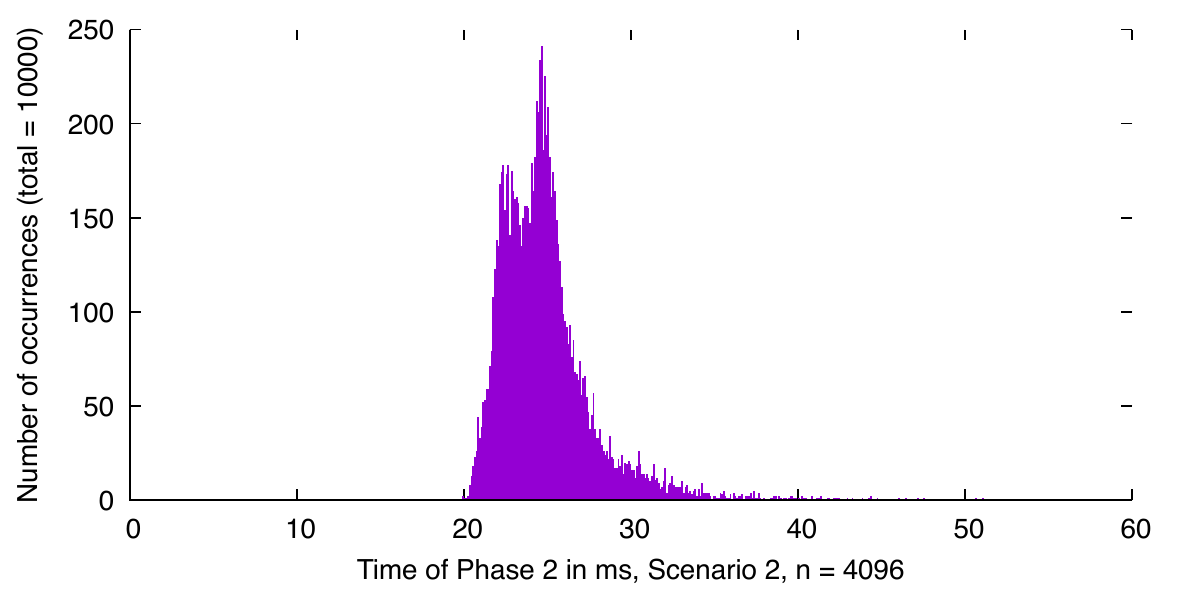}

\end{appendix}
\end{document}